	\newtheorem{corollary}{Corollary}[]
	\newtheorem{definition}{Definition}[]
	\newtheorem{lemma}{Lemma}[]
	\newtheorem{theorem}{Theorem}[]
\newcommand{\defeq}{\mathrel{:\mkern-0.25mu=}}
\newcommand{\bbR}{\ensuremath{\mathbb{R}}}
\newcommand{\cX}{\ensuremath{\mathcal{X}}}
\newcommand{\cY}{\ensuremath{\mathcal{Y}}}
\newcommand{\cfrp}{CFR$^+$}
\newcommand{\be}{\begin{eqnarray}}
\newcommand{\ee}[1]{\label{#1}\end{eqnarray}}
	\newcommand{\ese}{\end{eqnarray*}}
	\newcommand{\bse}{\begin{eqnarray*}}
	\def\beq{\begin{equation}}
	\def\eeq{\end{equation}}
	\def\fnote#1{\footnote}
	\def\*{{{\LARGE\bf $^*$}}}
	\def\cJ{{\cal J}}
	\def\cK{{\cal K}}
	\def\cR{{\cal R}}
	\def\cX{{\cal X}}
	\def\cY{{\cal Y}}
	\def\argmin{\mathop{\rm argmin}}
\newcommand{\sprm}[3]{$(#1,#2,#3)$-stable-predictive}
\newcommand{\seqf}[1]{X^{\triangle}_{#1}}
\newcommand{\sfrm}[1]{\cR^{\triangle}_{#1}}
\newcommand{\sfrT}[1]{R^{\triangle,T}_{#1}}
\newcommand{\sfm}[2]{m^{\triangle,#1}_{#2}}
\newcommand{\sfell}[2]{\ell^{\triangle,#1}_{#2}}
\newcommand{\sfx}[2]{x^{\triangle,#1}_{#2}}
\newcommand{\localrm}[1]{\hat\cR_{#1}}
\newcommand{\nextv}[1]{\mathcal{C}_{#1}}
\newcommand{\subt}[1]{{\triangle}_{#1}}
\newcommand{\laminarregret}[2]{{\hat R}^{#2}_{#1}}
\DeclareRobustCommand{\treeplexproduct}{%
  \tikz{%
    \draw (0,0) circle[radius=1.1mm];%
    \draw (-.55mm, -.55mm) -- (.55mm, .55mm);%
    \draw (-.55mm, .55mm) -- (.55mm, -.55mm);%
  }%
}
\DeclareRobustCommand{\treeplexproduct}{%
  \tikz[]{%
    \draw (0,0) circle[radius=1.1mm];%
    \draw (-.55mm, -.55mm) -- (.55mm, .55mm);%
    \draw (-.55mm, .55mm) -- (.55mm, -.55mm);%
  }%
}
\title{Stable-Predictive Optimistic Counterfactual\\Regret Minimization}
\author{
  Gabriele Farina\\
  Computer Science Department\\
  Carnegie Mellon University\\
  Pittsburgh, PA 15213 \\
  \texttt{gfarina@cs.cmu.edu}\\
   \And
  Christian Kroer\\
  IEOR Department\\
  Columbia University\\
  New York NY 10027\\
  \texttt{christian.kroer@columbia.edu}\\
  \AND
  Noam Brown\\
  Computer Science Department\\
  Carnegie Mellon University\\
  Pittsburgh, PA 15213 \\
  \texttt{noamb@cs.cmu.edu}\\
  \And
  Tuomas Sandholm\\
  Computer Science Department\\
  Carnegie Mellon University\\
  Pittsburgh, PA 15213 \\
  \texttt{sandholm@cs.cmu.edu}\\
}
\begin{document}
\maketitle
\begin{abstract}
  The CFR framework has been a powerful tool for solving large-scale extensive-form games in practice. However, the theoretical rate at which past CFR-based algorithms converge to the Nash equilibrium is on the order of $O(T^{-1/2})$, where $T$ is the number of iterations. In contrast, first-order methods can be used to achieve a $O(T^{-1})$ dependence on iterations, yet these methods have been less successful in practice. In this work we present the first CFR variant that breaks the square-root dependence on iterations. By combining and extending recent advances on predictive and stable regret minimizers for the matrix-game setting we show that it is possible to leverage ``optimistic'' regret minimizers to achieve a $O(T^{-3/4})$ convergence rate within CFR. This is achieved by introducing a new notion of stable-predictivity, and by setting the stability of each counterfactual regret minimizer relative to its location in the decision tree. Experiments show that this method is faster than the original CFR algorithm, although not as fast as newer variants, in spite of their worst-case $O(T^{-1/2})$ dependence on iterations.
\end{abstract}

\section{Introduction}

\emph{Counterfactual regret minimization} (CFR)~\citep{Zinkevich07:Regret} and later variants such as \emph{Monte-Carlo CFR}~\citep{Lanctot09:Montea}, \cfrp~\citep{Tammelin15:Solving}, and \emph{Discounted CFR}~\citep{Brown19:Solving}, have been the practical state-of-the-art in solving large-scale zero-sum \emph{extensive-form games} (EFGs) for the last decade. These algorithms were used as an essential ingredient for all recent milestones in the benchmark domain of poker~\citep{Bowling15:Heads,Moravvcik17:DeepStack,Brown17:Superhuman}. Despite this practical success all known CFR variants have a significant theoretical drawback: their worst-case convergence rate is on the order of $O({T}^{-1/2})$, where $T$ is the number of iterations. In contrast to this, there exist first-order methods that converge at a rate of $O(T^{-1})$~\citep{Hoda10:Smoothing,Kroer15:Faster,Kroer18:Faster}. However, these methods have been found to perform worse than newer CFR algorithms such as {\cfrp}, in spite of their theoretical advantage~\citep{Kroer18:Faster,Kroer18:Solving}.

In this paper we present the first CFR variant which breaks the square-root dependence on the number of iterations. By leveraging recent theoretical breakthroughs on ``optimistic'' regret minimizers for the matrix-game setting, we show how to set up optimistic counterfactual regret minimizers at each information set such that the overall algorithm retains the properties needed in order to accelerate convergence. In particular, this leads to a \emph{predictive} and \emph{stable} variant of CFR that converges at a rate of $O(T^{-3/4})$.

Typical analysis of regret-minimization leads to a convergence rate of $O(T^{-1/2})$ for solving zero-sum matrix games. However, by leveraging the idea of \emph{optimistic learning}~\citep{Chiang12:Online,Rakhlin13:Online,Rakhlin13:Optimization,Syrgkanis15:Fast,Wang18:Acceleration}, Rakhlin and Sridharan show in a series of papers that it is possible to converge at a rate of $O(T^{-1})$ when leveraging cancellations that occur due to the \emph{optimistic mirror descent} (OMD) algorithm~\citep{Rakhlin13:Online,Rakhlin13:Optimization}. \citet{Syrgkanis15:Fast} build on this idea, and introduce the \emph{optimistic follow-the-regularized-leader} (OFTRL) algorithm; they show that even when the players do not employ the same algorithm, a rate of $O(T^{-3/4})$ can be achieved as long as each algorithm belongs to a class of algorithms that satisfy a stability criterion and leverage predictability of loss inputs.  We build on this latter generalization. Because we can only perform the optimistic updates locally with respect to counterfactual regrets we cannot achieve the cancellations that leads to a rate of $O(T^{-1})$; instead we show that by carefully instantiating each counterfactual regret minimizer it is possible to maintain predictability and stability with respect to the overall decision-tree structure, thus leading to a convergence rate of $O(T^{-3/4})$. In order to achieve these results we introduce a new variant of stable-predictivity, and show that each local counterfactual regret minimizer must have its stability set relative to its location in the overall strategy space, with regret minimizers deeper in the decision tree requiring more stability.

In addition to our theoretical results we investigate the practical performance of our algorithm on several poker subgames from the \emph{Libratus} AI which beat top poker professionals~\citep{Brown17:Superhuman}. We find that our CFR variant coupled with the OFTRL algorithm and the entropy regularizer leads to better convergence rate than the vanilla CFR algorithm with regret matching, while it does  not outperform the newer state-of-the-art algorithm \emph{Discounted CFR} (DCFR)~\citep{Brown19:Solving}. This latter fact is not too surprising, as it has repeatedly been observed that {\cfrp}, and the newer and faster DCFR, converges at a rate \emph{better} than $O(T^{-1})$ for many practical games of interest, in spite of the worst-case rate of $O(T^{-1/2})$.

The reader may wonder why we care about breaking the square-root barrier within the CFR framework. It is well-known that a convergence rate of $O(T^{-1})$ can be achieved outside the CFR framework. As mentioned previously, this can be done with first-order methods such as the \emph{excessive gap technique}~\citep{Nesterov05:Excessive} or \emph{mirror prox}~\citep{Nemirovski04:Prox} combined with a dilated distance-generating function~\citep{Hoda10:Smoothing,Kroer15:Faster,Kroer18:Faster}. Despite this, there has been repeated interest in optimistic regret minimization within the CFR framework, due to the strong practical performance of CFR algorithms. \citet{Burch17:Time} tries to implement CFR-like features in the context of $O(T^{-1})$ FOMs and regret minimizers, while \citet{Brown19:Solving} experimentally tries optimistic variants of regret minimizers in CFR. We stress that these prior results are only experimental; our results are the first to rigorously incorporate optimistic regret minimization in CFR, and the first to achieve a theoretical speedup.

\textbf{Notation}. Throughout the paper, we use the following notation when dealing with $\bbR^n$. We use $\langle x,y\rangle$ to denote the \emph{dot product} $x^{\!\top}\!y$ of two vectors $x$ and $y$. We assume that a pair of dual norms $\|\cdot\|,\|\cdot\|_\ast$ has been chosen. These norms need not be induced by inner products. Common examples of such norm pairs are the $\ell_2$ norm which is self dual, and the $\ell_1,\ell_\infty$ norms, which are are dual to each other. We will make explicit use of the 2-norm: $\|x\|_2 \defeq \sqrt{\langle x, x\rangle}$.

\section{Sequential Decision Making and EFG Strategy Spaces}\label{sec:sdp}
A sequential decision process can be thought of as a tree consisting of two types of nodes: \emph{decision nodes} and \emph{observation nodes}. The set of all decision nodes is denoted as $\cJ$, and the set of all observation nodes with $\cK$. At each decision node $j \in \cJ$, the agent chooses a strategy from the simplex $\Delta^{n_j}$ of all
probability distributions over the set $A_j$ of $n_j = |A_j|$ actions available at that decision node.
An action is sampled according to the chosen distribution, and the agent then waits to play again. While waiting, the agent might receive a signal (observation) from the process; this possibility is represented with an observation node. At a generic observation point $k \in \cK$, the agent might receive $n_k$ signals; the set of signals that the agent can observe is denoted as $S_k$.
The observation node that is reached by the agent after picking action $a \in A_j$ at decision point $j\in \cJ$ is denoted by $\rho(j, a)$. Likewise, the decision node reached by the agent after observing signal $s\in S_k$ at observation point $k \in \cK$ is denoted by $\rho(k, s)$. The set of all observation points reachable from $j \in \cJ$ is denoted as $\nextv{j} \defeq \{\rho(j, a): a\in A_j\}$. Similarly, the set of all decision points reachable from $k \in \cK$ is denoted as $\nextv{k} \defeq \{\rho(k, s): s\in S_k\}$. To ease the notation, sometimes we will use the notation $\nextv{ja}$ to mean $\nextv{\rho(j,a)}$.
A concrete example of a decision process is given in the next subsection.

At each decision point $j\in \cJ$ in a sequential decision process, the decision $\hat x_j \in \Delta^{n_j}$ of the agent incurs an (expected) linear loss $\langle \ell_j, \hat x_j\rangle$. The expected loss throughout the whole process is therefore
$
  \sum_{j\in\cJ} \pi_j \langle \ell_j, \hat x_j \rangle,
$
where $\pi_j$ is the probability of the agent reaching decision point $j$, defined as the product of the probability with which the agent plays each action on the path from the root of the process to $j$.

In extensive-form games where all players have \emph{perfect recall} (that is, they never forget about their past moves or their observations), all players face a sequential decision process. The loss vectors $\{\ell_j\}$ are defined based on the strategies of the opponent(s) as well as the chance player. However, as already observed by~\citet{Farina19:Online}, sequential decision processes are more general and can model other settings as well, such as POMDPs and MDPs when the decision maker conditions on the entire history of observations and actions.

\subsection{Example: Sequential Decision Process for the First Player in Kuhn Poker}
As an illustration, consider the game of Kuhn poker~\citep{Kuhn50:Simplified}.
Kuhn poker consists of a three-card deck: king, queen, and jack. Each player
first has to put a payment of 1 into the pot. Each player is then dealt one of the three cards, and the third is put
aside unseen. A single round of betting then occurs:
\begin{itemize}[nolistsep]
\item Player $1$ can check or bet $1$.
  \begin{itemize}[nolistsep]
  \item If Player $1$ checks Player $2$ can check or raise $1$.
    \begin{itemize}[nolistsep]
      \item If Player $2$ checks a showdown occurs.
      \item If Player $2$ raises Player $1$ can fold or call.
        \begin{itemize}
          \item If Player $1$ folds Player $2$ takes the pot.
          \item If Player $1$ calls a showdown occurs.
          \end{itemize}
        \end{itemize}
      \item If Player $1$ raises Player $2$ can fold or call.
        \begin{itemize}[nolistsep]
        \item If Player $2$ folds Player $1$ takes the pot.
        \item If Player $2$ calls a showdown occurs.
        \end{itemize}
      \end{itemize}
\end{itemize}
If no player has folded, a showdown occurs where the player with the higher card wins.
The sequential decision process the Player 1 is shown in Figure~\ref{fig:kuhn treeplex
  player1}, where \treeplexproduct{} denotes an observation point. In that example, we have: $\cJ = \{X_0,X_1,X_2,X_3,X_4,X_5,X_6\}$; $n_0 = 1$; $n_j = 2$ for all $j\in\cJ
\setminus \{X_0\}$; $A_{X_0} = \{\text{start}\}$, $A_{X_1} = A_{X_2} = A_{X_3} = \{\text{check},
\text{raise}\}$, $A_{X_4} = A_{X_5} = A_{X_6} = \{\text{fold}, \text{call}\}$;
$\nextv{\rho(X_0,\text{start})} = \{X_1, X_2, X_3\}$, $\nextv{\rho(X_1, \text{raise})}
= \emptyset$, $\nextv{\rho(X_3,\text{check})} = \{X_6\}$; etc.
\begin{figure}[ht]
  \centering\includegraphics[width=.45\linewidth]{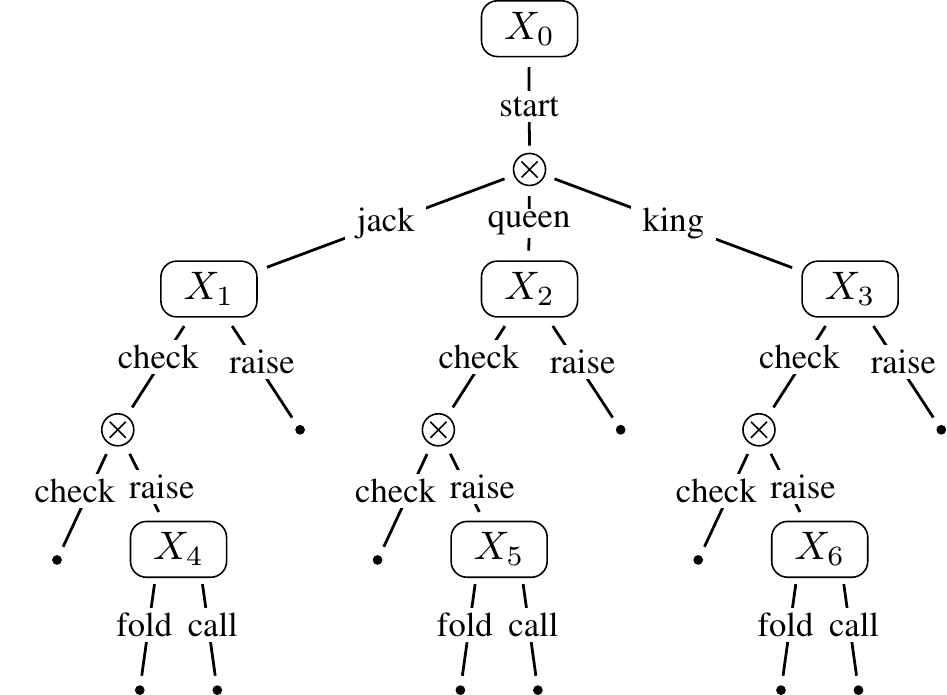}
   \caption{The sequential decision process for the first player in the game of Kuhn poker. \treeplexproduct{} denotes an observation point; small dots represents the end of the decision process.}
  \label{fig:kuhn treeplex player1}
\end{figure}

\subsection{Sequence Form for Sequential Decision Processes}\label{sec:sequence form}
The expected loss for a given strategy, as defined in Section~\ref{sec:sdp}, is non-linear in the vector of decisions variables $(\hat x_j)_{j\in\cJ}$. This non-linearity is due to the product $\pi_j$ of probabilities
of all actions on the path to from the root to $j$. We now present a well-known alternative representation of
this decision space which preserves linearity.

The alternative formulation is called the \emph{sequence form}. In the sequence-form
representation, the simplex strategy space at a generic decision point $j\in \cJ$ is scaled by the decision variable leading of the last action in the path from the root of the process to $j$. In this formulation, the value of a particular action
represents the probability of playing the whole \emph{sequence} of actions from
the root to that action. This allows each term in the expected loss to be
weighted only by the sequence ending in the corresponding action. The sequence
form has been used to instantiate linear programming~\citep{Stengel96:Efficient}
and first-order methods~\citep{Hoda10:Smoothing,Kroer15:Faster,Kroer18:Faster}
for computing Nash equilibria of zero-sum EFGs. There is a straightforward
mapping between a vector of decisions $(\hat x_j)_{j\in \cJ}$, one for each decision point, and its corresponding sequence form: simply assign
each sequence the product of probabilities in the sequence. We will let $\seqf{}$ denote the sequence-form representation of a vector of decisions $(\hat x_j)_{j\in \cJ}$. Likewise,
going from a sequence-form strategy $x^\triangle \in \seqf{}$ to a corresponding vector of decisions $(\hat x_j)_{j\in \cJ}$ can be done by dividing each
entry (sequence) in $x^\triangle$ by the value $x^\triangle_{p_j}$ where $p_j$ is the entry in $x^\triangle$ corresponding
to the unique last action that the agent took before reaching $j$.

Formally, the sequence-form representation $\seqf{}$ of a sequential decision process can be obtained recursively, as follows:
\begin{itemize}
  \item At every observation point $k \in \cK$, we let
    \begin{equation}\label{eq:sf observation point}
      \seqf{k} \defeq \seqf{j_1} \times \seqf{j_2} \times \dots \times \seqf{j_{n_k}},
    \end{equation}
    where $\{j_1, j_2, \dots, j_{n_k}\} = \nextv{k}$ are the children decision points of $k$.
  \item At every decision point $j \in \cJ$, we let
  \begin{equation}\label{eq:sf decision point}
    \seqf{j} \defeq \left\{\!\left(\begin{array}{c}
                               \lambda_1 \\
                               \vdots \\
                               \lambda_{n_j}\\
                               \hline
                               \lambda_1 x_{k_1}\\
                               \vdots \\
                               \lambda_{n_j} x_{k_{n_j}}
                             \end{array}\right): (\lambda_1, \dots, \lambda_n) \in \Delta^{n_j}, x_{k_1} \in \seqf{k_1}, x_{k_2}\in \seqf{k_2}, \dots, x_{k_{n_j}} \in \seqf{k_{n_j}} \right\}\!,
  \end{equation}
  where $\{k_1, k_2, \dots, k_{n_j}\} = \nextv{j}$ are the children observation points of $j$.
\end{itemize}
The sequence form strategy space for the whole sequential decision process is then $\seqf{r}$, where $r$ is the root of the process. Crucially, $\seqf{}$ is a convex and compact set, and the expected loss of the process is a linear function over $\seqf{}$.

With the sequence-form representation the problem of computing a Nash
equilibriun in an EFG can be formulated as a \emph{bilinear saddle-point
  problem} (BSPP). A BSPP has the form
\begin{equation}\label{eq:saddle point problem}
  \min_{x\in \cX} \max_{y\in \cY} x^{\!\top}\!\!A y,
\end{equation}
where $\cX$ and $\cY$ are convex and compact sets. In the case of extensive-form games, $\cX = \seqf{}$ and $\cY = Y^{\triangle}$ are the sequence-form strategy spaces of the sequential decision processes faced by the two players,
and $A$ is a sparse matrix encoding the leaf payoffs of the game.

\subsection{Notation when dealing with the extensive form}\label{sec:sf notation}
In the rest of the paper, we will make heavy use of the sequence form and its inductive construction given in~\eqref{eq:sfx k} and~\eqref{eq:sfx j}. We will consistently denote sequence-form strategies with a triangle superscript.
As we have already observed, vectors that pertain to the sequence-form have one entry for each sequence of the decision process, that is one entry for pair $(j, a)$ where $j\in\cJ, a \in A_j$. Sometimes, we will need to \emph{slice} a vector $v$ and isolate only those entries that refer to all decision points $j'$ and actions $a' \in A_{j'}$ that are at or below some $j\in \cJ$; we will denote such operation as $[v]_{\downarrow j}$. Similarly, we introduce the syntax $[v]_j$ to denote the subset of $n_j = |A_j|$ entries of $v$ that pertain to all actions $a \in A_j$ at decision point $j\in\cJ$.

\section{Stable-Predictive Regret Minimizers}

In this paper, we operate within the online learning framework called \emph{online convex optimization}~\citep{Zinkevich03:Online}. In particular, we restrict our attention to a modern subtopic: \emph{predictive} (also often called \emph{optimistic}) regret minimization~\citep{Chiang12:Online,Rakhlin13:Online,Rakhlin13:Optimization}.

As usual in this setting, a decision maker repeatedly plays against an unknown environment by making a sequence of decisions $x^1, x^2, \dots \in \cX \subseteq \mathbb{R}^n$, where the set $\cX$ of feasible decisions for the decision maker is convex and compact. The evaluation of the outcome of each decision $x^t$ is $\langle \ell^t\!,\, x^t\rangle$, where $\ell^t\in\cX$ is a convex \emph{loss vector}, unknown to the decision maker until after the decision is made. The peculiarity of \emph{predictive} regret minimization is that we also assume that the decision maker has access to \emph{predictions} $m^1, m^2, \dots$ of what the loss vectors $\ell^1, \ell^2, \dots$ will be. In summary, by \emph{predictive regret minimizer} we mean a device that supports the following two operations:
\begin{itemize}
  \item it provides the next decision $x^{t+1}\!\in\!\cX$ given a prediction $m^{t+1}$ of the next loss vector and
  \item it receives/observes the convex loss vectors $\ell^t$ used to evaluate decision $x^t$.
\end{itemize}

The learning is \emph{online} in the sense that the decision maker's (that is, device's) next decision, $x^{t+1}$, is based only on the previous decisions $x^1, \dots, x^t$, observed loss vectors $\ell^1, \dots, \ell^t$, and the prediction of the past loss vectors as well as the next one $m^1, \dots, m^{t+1}$.

Just as in the case of a regular (that is, non-predictive) regret minimizer, the quality metric for the predictive regret minimizer is its \emph{cumulative regret}, which is the difference between the loss cumulated by the sequence of decisions $x^1, \dots, x^T$ and the loss that would have been cumulated by playing the best-in-hindsight time-independent decision $\hat x$. Formally, the cumulative regret up to time $T$ is
\begin{equation}\label{eq:regret defn}
  R^T \defeq \sum_{t=1}^T \langle \ell^t\!,\, x^t\rangle  - \min_{\tilde x \in \cX} \left\{\sum_{t=1}^T \langle\ell^t\!,\,\tilde x\rangle\right\}\!.
\end{equation}%

We introduce a new class of predictive regret minimizers whose cumulative regret decomposes into a constant term plus a measure of the prediction quality, while maintaining stability in the sense that the iterates $x^1, \dots, x^T$ change slowly.

\begin{definition}[Stable-predictive regret minimizer]\label{def:sprm}
 A predictive regret minimizer is \sprm{\kappa}{\alpha}{\beta} if the following two conditions are met:
    \begin{itemize}
        \item \emph{Stability}. The decisions produced change slowly:
        \begin{equation}\label{eq:stability}
          \|x^{t+1} - x^t\| \le \kappa \quad\forall\, t \ge 1.
        \end{equation}
        \item \emph{Prediction bound}. For all $T$, the cumulative regret up to time $T$ is bounded according to
        \begin{equation}\label{eq:sp regret}
          R^T \le \frac{\alpha}{\kappa} + \beta\kappa\sum_{t=1}^T \|\ell^t - m^t\|_\ast^2.
        \end{equation}
        In other words, small prediction errors only minimally affect the regret accumulated by the device. If, in particular, the prediction $m^t$ matches the loss vector $\ell^t$ perfectly for all $t$, the cumulative regret remains asymptotically constant.
    \end{itemize}
\end{definition}

Our notion of stable-predictivity is similar to the \emph{Regret bounded
by Variation in Utilities} (RVU) property given by~\citet{Syrgkanis15:Fast}, which asserts that
\begin{equation}\label{eq:rvu}
 R^T \le \alpha' + \beta' \sum_{t=1}^T \|\ell^t - \ell^{t-1}\|_\ast^2 - \gamma' \sum_{t=1}^T \|x^t - x^{t-1}\|^2. \tag{RVU}
\end{equation}
However, there are several important differences:
\begin{itemize}
  \item \citet{Syrgkanis15:Fast} assume that $m^t = \ell^{t-1}$; this explains the term $\|\ell^t - \ell^{t-1}\|_\ast^2$ in~\eqref{eq:rvu} instead of $\|\ell^t - m^t\|_\ast^2$ in~\eqref{eq:sp regret}. One of the reason why we do not make assumptions on $m^t$ is that, unlike in matrix games, we will need to use modified predictions for each local regret minimizer, since we need to predict the local counterfactual loss.
  \item Our notion ignores the cancellation term $-\gamma' \sum\|x^t - x^{t-1}\|^2$; instead, we require the \emph{stabilty} property~\eqref{eq:stability}.
  \item The coefficients in the regret bound~\eqref{eq:sp regret} are forced to be inversely proportional, and tied to the choice of the stability parameter $\kappa$. \citet{Syrgkanis15:Fast} show that same correlation holds for the optimistic follow-the-regularized leader, but they don't require it in their definition of the RVU property.
\end{itemize}

\citet{Syrgkanis15:Fast} show that their optimistic follow-the-regularized-leader (OFTRL) algorithm, as well as the variant of the mirror descent algorithm presented by~\citet{Rakhlin13:Online}, satisfy~\eqref{eq:rvu}. In Section~\ref{sec:oftrl} we show that OFTRL also satisfies stable-predictivity.

\subsection{Relationship with Bilinear Saddle-Point Problems}
\label{sec:bspp}
In this subsection we show how stable-predictive regret minimization can be used to solve a  BSPP such as a Nash equilibrium problem in two-player  zero-sum extensive-form games with perfect recall (Sections~\ref{sec:sdp} and~\ref{sec:sequence form}).
The solutions of~\eqref{eq:saddle point problem} are called \emph{saddle points}. The \emph{saddle-point residual} (or \emph{gap}) $\xi$ of a point $(\bar x, \bar y)\in \cX \times \cY$, defined as
\begin{align*}
  \xi \defeq \max_{\hat{y}\in\cY}\, {\bar x}^{\!\top}\!A \hat y - \min_{\hat x\in\cX}\,{\hat x}^{\!\top}\!A \bar y,
\end{align*}%
measures how close $(\bar x, \bar y)$ is to being a saddle point (the lower the residual, the closer).

It is known that regular (non-predictive) regret minimization yields an anytime algorithm that produces a sequence of points $(\bar x^T, \bar y^T) \in \cX\times\cY$ whose residuals are $\xi^T = O(T^{-1/2})$. \citet{Syrgkanis15:Fast} observe that in the context of matrix games (i.e., when $\cX$ and $\cY$ are simplexes), RVU minimizers that also satisfy the stability condition~\eqref{eq:stability} can be used in place of regular regret minimizers to improve the convergence rate to $O(T^{-3/4})$. In what follows, we show how to extend the argument to stable-predictive regret minimizers and general bilinear saddle-point problems beyond Nash equilibria in two-player zero-sum matrix games.

A folk theorem explains the tight connections between low regret and low residual~\citep{Cesa06:Prediction}. Specifically, by setting up two regret minimizers (one for $\cX$ and one for $\cY$) that observe loss vectors given by
$
  \ell^t_\cX \defeq -Ay^{t},
  \ell^t_\cY \defeq A^{\!\top}\! x^{t},
$
the profile of average decisions
\begin{equation}\label{eq:avg decision}
  \left(\frac{1}{T}\sum_{t=1}^T x^t, \frac{1}{T}\sum_{t=1}^T y^t\right) \in \cX \times \cY
\end{equation}
has residual $\xi$ bounded from above according to
\[
  \xi \le \frac{1}{T}(R^T_\cX + R^T_\cY).
\]
Hence, by letting the predictions be defined as
$
  m^t_\cX \defeq \ell_\cX^{t-1},
  m^t_\cY \defeq \ell_\cY^{t-1},
$
and assuming that the predictive regret minimizers are \sprm{\kappa}{\alpha}{\beta}, we obtain that the residual $\xi$ of the average decisions~\eqref{eq:avg decision} satisfies
\begin{align*}
  T\xi &\le \frac{2\alpha}{\kappa} + \beta\kappa\sum_{t=1}^T \|{-A}y^t + Ay^{t-1}\|_\ast^2\\[-4mm]
        &\hspace{3.6cm}+ \beta\kappa\sum_{t=1}^T \|A^\top\! x^t - A^\top\! x^{t-1}\|_\ast^2\\
        &\le \frac{2\alpha}{\kappa} + \beta \|A\|_\text{op}^2 \kappa\!\left(
         \sum_{t=1}^{T} \|x^{t} \!-\! x^{t-1}\|^2 + \sum_{t=1}^{T} \|y^{t} \!-\! y^{t-1}\|^2\!\right)\\
        &\le \frac{2\alpha}{\kappa} + 2 \beta T \|A\|_\text{op}^2 \kappa^3,
\end{align*}
where the first inequality holds by~\eqref{eq:sp regret}, the second by noting that the operator norm $\|\cdot\|_\text{op}$ of a linear function is equal to the operator norm of its transpose, and the third inequality by the stability condition~\eqref{eq:stability}.
This shows that if the stability parameter $\kappa$ of the two stable-predictive regret minimizers is $\Theta(T^{-1/4})$, then the saddle point residual is $\xi = O(T^{-3/4})$, an improvement over the bound $\xi = O(T^{-1/2})$ obtained with regular (that is, non-predictive) regret minimizers. 
\subsection{Optimistic Follow the Regularized Leader}\label{sec:oftrl}
\emph{Optimistic follow-the-regularized-leader} (OFTRL) is a regret minimizer introduced by~\citet{Syrgkanis15:Fast}. At each time $t$, OFTRL outputs the decision
\begin{equation}\label{eq:oftrl choice}
  x^t = \argmin_{\tilde x\in\cX} \left\{\left\langle \tilde x, m^t +
  \sum_{t=1}^{T-1} \ell^t \right\rangle + \frac{1}{\eta}R(\tilde x)\right\}\!,
\end{equation}
where $\eta > 0$ is a free constant and $R(\cdot)$ is a 1-strongly convex regularizer with respect to the norm $\|\cdot\|$. Furthermore, let $\Delta_R \defeq \max_{x,y\in\cX} \{R(x) - R(y)\}$ denote the diameter of the range of $R$, and let $\Delta_\ell \defeq \max_t \max\{\|\ell^t\|_\ast, \|m^t\|_\ast\}$ be the maximum (dual) norm of any loss vector or prediction thereof.

A theorem similar to that of~\citet[Proposition~7]{Syrgkanis15:Fast}, which was obtained in the context of the RVU property, can be shown for the stable-predictive framework:
\begin{theorem}\label{thm:oftrl}
  OFTRL is a $3\Delta_\ell$\sprm{\eta}{\Delta_R}{1} regret minimizer.
\end{theorem}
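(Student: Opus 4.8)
The statement packages the two requirements of Definition~\ref{def:sprm}: a stability bound and a prediction (regret) bound. Reading the claim with $\kappa=\eta$, $\alpha=\Delta_R$, $\beta=1$, and the prefactor $3\Delta_\ell$ loosening the stability constant, the plan is to prove (i) $\|x^{t+1}-x^t\|\le 3\Delta_\ell\,\eta$ for all $t\ge1$, and (ii) $R^T\le \frac{\Delta_R}{\eta}+\eta\sum_{t=1}^T\|\ell^t-m^t\|_\ast^2$, which is precisely~\eqref{eq:sp regret} with those parameters (the factor $3\Delta_\ell$ then quantifies how much the stability constant exceeds $\kappa=\eta$).

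\textbf{Stability.} Write the OFTRL objective at time $t$ as $F_t(\tilde x)\defeq\langle\tilde x,\,m^t+\sum_{\tau<t}\ell^\tau\rangle+\frac1\eta R(\tilde x)$, so that $x^t=\argmin_{\tilde x\in\cX}F_t(\tilde x)$. Since $R$ is $1$-strongly convex w.r.t.\ $\|\cdot\|$, each $F_t$ is $\frac1\eta$-strongly convex, and $F_{t+1}-F_t=\langle\,\cdot\,,g^t\rangle$ is linear with $g^t\defeq \ell^t+m^{t+1}-m^t$. Combining the first-order optimality conditions at $x^t$ and $x^{t+1}$ (adding $\langle\nabla F_t(x^t),x^{t+1}-x^t\rangle\ge0$ and $\langle\nabla F_{t+1}(x^{t+1}),x^t-x^{t+1}\rangle\ge0$ and using $\nabla F_{t+1}=\nabla F_t+g^t$) with strong convexity gives $\frac1\eta\|x^{t+1}-x^t\|^2\le\langle g^t,\,x^t-x^{t+1}\rangle\le\|g^t\|_\ast\|x^{t+1}-x^t\|$, hence $\|x^{t+1}-x^t\|\le\eta\|g^t\|_\ast\le\eta(\|\ell^t\|_\ast+\|m^{t+1}\|_\ast+\|m^t\|_\ast)\le 3\Delta_\ell\,\eta$, by the definition of $\Delta_\ell$.

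\textbf{Regret bound.} Here I follow the OFTRL analysis behind~\citet[Proposition~7]{Syrgkanis15:Fast}, adapted to the FTRL form and to a general prediction $m^t$ (not assumed equal to $\ell^{t-1}$). The core step is the intermediate inequality
\[
  R^T\le \frac{\Delta_R}{\eta}+\sum_{t=1}^T\langle \ell^t-m^t,\,x^t-x^{t+1}\rangle-\frac{1}{2\eta}\sum_{t=1}^T\|x^t-x^{t+1}\|^2,
\]
obtained by comparing each iterate $x^t$ against $x^{t+1}$: strong convexity of $\frac1\eta R$ produces the negative quadratic terms via a three-point (Bregman) inequality, while the telescoping of the regularizer differences is bounded by the range diameter, since $R(\hat x)-R(x^1)\le\Delta_R$. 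Applying Young's inequality termwise in the form $\langle\ell^t-m^t,\,x^t-x^{t+1}\rangle\le\eta\|\ell^t-m^t\|_\ast^2+\frac{1}{4\eta}\|x^t-x^{t+1}\|^2$, the $\frac{1}{4\eta}$ terms are absorbed by the $-\frac{1}{2\eta}$ cancellation and dropped, leaving exactly the prediction bound in (ii) with $\beta=1$.

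\textbf{Main obstacle.} The delicate part is the intermediate inequality in the regret step. I would establish it through the regularized-conjugate potential $\Phi_t(L)\defeq\max_{\tilde x\in\cX}\{-\langle\tilde x,L\rangle-\frac1\eta R(\tilde x)\}$, tracking the gap between optimizing against the true cumulative loss $\sum_{\tau\le t}\ell^\tau$ and against its prediction-augmented surrogate $\sum_{\tau<t}\ell^\tau+m^t$. The two points that need care are (a) generating a quadratic cancellation term with coefficient at least $\frac{1}{4\eta}$ (the $\frac{1}{2\eta}$ above comfortably suffices), so that no leftover positive quadratic survives after Young's inequality and the clean $\beta=1$ is obtained, and (b) handling the offset between $m^t$ and $\ell^{t-1}$ without assuming they coincide—this is exactly why the bound is phrased in terms of $\|\ell^t-m^t\|_\ast$ rather than $\|\ell^t-\ell^{t-1}\|_\ast$, matching the flexibility we will need for the counterfactual predictions. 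Once the intermediate inequality is in hand, the remaining algebra is routine.
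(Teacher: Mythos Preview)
Your proposal is correct and follows essentially the same route as the paper. The stability argument is identical in substance: the paper phrases it as $\eta$-Lipschitz continuity of the argmin map $\tilde x(L)$ proved via the two variational inequalities at $\tilde x(L)$ and $\tilde x(L')$ plus strong convexity, which is exactly your combination of first-order optimality at $x^t$, $x^{t+1}$ with $\nabla F_{t+1}=\nabla F_t+g^t$; both yield $\|x^{t+1}-x^t\|\le\eta\|\ell^t+m^{t+1}-m^t\|_\ast\le 3\eta\Delta_\ell$. For the prediction bound the paper simply defers to Theorem~19 in the appendix of \citet{Syrgkanis15:Fast}, so your more detailed sketch via the intermediate inequality and Young's inequality is in fact \emph{more} explicit than what the paper writes, but it is the same underlying argument. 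One small caution: in the standard OFTRL analysis the natural comparison point in the intermediate inequality is the auxiliary FTRL iterate $w^t=\argmin_{\tilde x}\{\langle\tilde x,\sum_{\tau\le t}\ell^\tau\rangle+\frac1\eta R(\tilde x)\}$ rather than $x^{t+1}$ (which carries the extra $m^{t+1}$); your conjugate-potential plan would naturally produce that version, and the Young-inequality cancellation then goes through with the same constants.
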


We give a proof of Theorem~\ref{thm:oftrl} the appendix. When the loss vectors are further assumed to be non-negative, it can be shown that OFTRL is $2\Delta_\ell$\sprm{\eta}{\Delta_R}{1}, where we have substituted a factor of $2$ rather than the factor of $3$ in Theorem~\ref{thm:oftrl}.


\section{CFR as Regret Decomposition}\label{sec:cfr subtree}
In this section we offer some insights into CFR, and discuss what changes need to be made in order to leverage the power of predictive regret minimization. CFR is a framework for constructing a (non-predictive) regret minimizer $\cR^\triangle$ that operates over the sequence-form strategy space $\seqf{}$ of a sequential decision process. In accordance with Section~\ref{sec:sf notation}, we denote the decision produced by $\cR^\triangle$ at time $t$ as $\sfx{t}{}$; the corresponding loss functions is denoted as $\sfell{t}{}$.

One central idea in CFR is to define a localized notion of loss: for all $j\in\cJ$, CFR constructs the following linear \emph{counterfactual loss function} ${\hat \ell}^{t,\circ}_{j} : \Delta^{n_j} \to \bbR$. Intuitively, the counterfactual loss ${\hat \ell}^{t,\circ}_j( x_j)$ of a \emph{local} strategy $ x_j \in \Delta^{n_j}$ measures the loss that the agent would face were the agent allowed to change the strategy at decision point $j$ \emph{only}.
In particular, ${\hat \ell}^{t,\circ}_j( x_j)$ is the loss of an agent that follows the strategy $ x_j$ instead of $\sfx{t}{}$ at decision point $j$, but otherwise follows the strategy $\sfx{t}{}$ everywhere else.
Formally,
\begin{equation}\label{eq:counterfactual loss function}
  {\hat \ell}^{t,\circ}_j :  x_j = ( x_{ja_1}, \dots  x_{ja_{n_j}}) \mapsto  \langle [\sfell{t}{}]_j,  x_j \rangle + \sum_{a \in A_j} \left( x_{ja} \sum_{j' \in \nextv{ja}}\langle [\sfell{t}{}]_{\downarrow j'}, [\sfx{t}{}]_{\downarrow j'}\rangle\right).
\end{equation}

Since ${\hat\ell}^{t,\circ}_{j}$ is a linear function, it has a unique representation as a \emph{counterfactual loss vector} ${\hat\ell}^t_{j}$, defined as
\begin{equation}\label{eq:counterfactual loss vector}
  {\hat\ell}^{t,\circ}_{j}( x_j) = \langle {\hat\ell}^t_{j},  x_j \rangle \quad\forall\,  x_j \in \Delta^{n_j}.
\end{equation}
With this local notion of loss function, a corresponding local notion of regret for a sequence of decisions $\hat x_j^1, \dots, \hat x_j^T$, called the \emph{counterfactual regret}, is defined for each decision point $j\in \cJ$:
\[
  \laminarregret{j}{T} \defeq \sum_{t=1}^T \langle {\hat\ell}^t_{j}, \hat x^t_j \rangle - \min_{\tilde x_j\in \Delta^{n_j}} \sum_{t=1}^T \langle {\hat\ell}^t_j, \tilde x_j \rangle.
\]
Intuitively, $\laminarregret{j}{T}$ represents the difference between the loss that was suffered for picking $\hat x_j^t \in \Delta^{n_j}$ and the minimum loss that could be secured by choosing a different strategy \emph{at decision point $j$ only}. This is conceptually different from the definition of regret of $\cR^\triangle$, which instead measures the difference between the loss suffered and the best loss that could have been obtained, in hindsight, by picking \emph{any} strategy from the whole strategy space, with no extra constraints.

With this notion of regret, CFR instantiates one (non-stable-predictive) regret minimizer $\localrm{j}$ for each decision point $j\in\cJ$. Each local regret minimizer $\localrm{j}$ operates on the domain $\Delta^{n_j}$, that is, the space of strategies at decision point $j$ only. At each time $t$, $\cR^\triangle$ prescribes the strategy that, at each information set $j$, behaves according to the decision of $\localrm{j}$. Similarly, any loss vector $\sfell{t}{}$ input to $\cR^\triangle$ is processed as follows: (i) first, the counterfactual loss vectors $\{{\hat\ell}^t_{j}\}_{j\in \cJ}$, one for each decision point $j \in \cJ$, are computed; (ii) then, each $\localrm{j}$ observes its corresponding counterfactual loss vector ${\hat\ell}^t_j$.

Another way to look at CFR and counterfactual losses is as an inductive construction over subtrees. When a loss function relative to the whole sequential decision process is received by the root node, inductively each node of the sequential decision process does the following:
\begin{itemize}
  \item If the node receiving the loss vector is an observation node, the incoming loss vector is partitioned and forwarded to each child decision node. The partition of the loss vector is done so as to ensure that only entries relevant to each subtree are received down the tree.
  \item If the node receiving the loss vector is a decision node, the incoming loss vector is first forwarded as-is to each of the child observation points, and then it is used to construct the counterfactual loss vector ${\hat\ell}^t_{j}$ which is input into $\localrm{j}$.
\end{itemize}
This alternative point of view differs from the original one, but has been recently used by~Farina et al.~(\citeyear{Farina18:Composability,Farina19:Online}) to simplify the analysis of the algorithm.
When viewed from the above point of view, CFR is recursively building---in a bottom-up fashion---regret minimizers for each subtree starting from child subtrees.

In accordance with our convention (Section~\ref{sec:sf notation}), we denote $\sfrm{{v}}$, for $v\in\cJ\cup\cK$, the regret minimizer that operates on $\seqf{v}$ obtained by only considering the local regret minimizers in the subtree rooted at vertex $v$ of the sequential decision process. Analogously, we will denote with $\sfrT{v}$ the regret of $\sfrm{{v}}$ up to time $T$, and with $\sfell{t}{v}$ the loss function entering $\sfrm{{v}}$ at time $t$. In accordance with the above construction, we have that
\begin{equation}\label{eq:sf loss slice}
  \sfell{t}{k} = [\sfell{t}{j}]_{\downarrow k} \quad\forall k \in\nextv{j},\quad\text{and}\quad \sfell{t}{j} = [\sfell{t}{k}]_{\downarrow j} \quad \forall j \in \nextv{k}.
\end{equation}
Finally, we denote the decisions produced by $\sfrm{v}$ at time $t$ as $\sfx{t}{v}$. As per our discussion above, the decisions produced by $\cR^\triangle$ are tied together inductively according to
\begin{equation}\label{eq:sfx k}
  \forall k\in \cK,\quad \sfx{t}{k} = (\sfx{t}{j_1},\dots, \sfx{t}{j_{n_k}}), \quad\text{where }\{j_1, \dots, j_{n_k}\} = \nextv{k},
\end{equation}
and
\begin{equation}\label{eq:sfx j}
  \forall j\in \cJ,\quad \sfx{t}{j} = \left(\hat x^t_j, \hat x^t_{ja_1}\sfx{t}{\rho(j,a_1)}, \dots, \hat x^t_{ja_{n_j}}\sfx{t}{\rho(j,a_{n_j})}\right)\quad\text{where }\{a_1, \dots, a_{n_j}\} = A_j.
\end{equation}
The following two lemmas can be easily extracted from~\citet{Farina18:Composability}:

\begin{lemma}\label{lem:regret bound cartesian product}
  Let $k\in\cK$ be an observation node. Then,
  $\displaystyle
    \sfrT{k} = \sum_{j\in\nextv{k}} \sfrT{j}.
  $
\end{lemma}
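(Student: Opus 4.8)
The plan is to unfold the definition of $\sfrT{k}$ and exploit two structural facts: the Cartesian-product form of $\seqf{k}$ recorded in~\eqref{eq:sf observation point} (equivalently, the concatenation of iterates~\eqref{eq:sfx k}) and the loss slicing~\eqref{eq:sf loss slice}. Write $\{j_1,\dots,j_{n_k}\} = \nextv{k}$. Every feasible point $\tilde x\in\seqf{k}$ is a tuple $(\tilde x_{j_1},\dots,\tilde x_{j_{n_k}})$ with $\tilde x_{j_i}\in\seqf{j_i}$, and the coordinate blocks $\{\downarrow j_i\}$ partition the coordinates of any sequence-form vector rooted at $k$. Hence the pairing splits blockwise,
\[
  \langle \sfell{t}{k}, \tilde x\rangle = \sum_{i=1}^{n_k} \langle [\sfell{t}{k}]_{\downarrow j_i}, \tilde x_{j_i}\rangle = \sum_{i=1}^{n_k} \langle \sfell{t}{j_i}, \tilde x_{j_i}\rangle,
\]
where the last equality uses $\sfell{t}{j_i} = [\sfell{t}{k}]_{\downarrow j_i}$ from~\eqref{eq:sf loss slice}.

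First I would apply this identity to the played iterates $\sfx{t}{k} = (\sfx{t}{j_1},\dots,\sfx{t}{j_{n_k}})$ so as to rewrite the cumulated-loss term of $\sfrT{k}$ as $\sum_{t}\sum_{i} \langle \sfell{t}{j_i}, \sfx{t}{j_i}\rangle$, i.e.\ as the sum over children of the cumulated-loss terms appearing in each $\sfrT{j_i}$. This step is purely a re-indexing and carries no approximation.

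The crux is the comparator term $\min_{\tilde x\in\seqf{k}} \sum_{t} \langle \sfell{t}{k}, \tilde x\rangle$. Swapping the order of summation and applying the blockwise splitting turns the objective into $\sum_{i} \sum_{t} \langle \sfell{t}{j_i}, \tilde x_{j_i}\rangle$, a \emph{separable} sum in which the $i$-th summand depends only on the block $\tilde x_{j_i}$, while the feasible set is the product $\seqf{j_1}\times\cdots\times\seqf{j_{n_k}}$. Minimizing a separable objective over a product set decomposes into independent per-block minimizations, so the comparator equals $\sum_{i} \min_{\tilde x_{j_i}\in\seqf{j_i}} \sum_{t} \langle \sfell{t}{j_i}, \tilde x_{j_i}\rangle$. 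This interchange of the minimum with the sum over the Cartesian product is the one place where a short justification is needed, and it is precisely where the observation-node product structure~\eqref{eq:sf observation point} is used; it is also the reason an observation node contributes its children's regrets \emph{additively}, rather than through a simplex-weighted combination as a decision node would.

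Subtracting the decomposed comparator from the decomposed cumulated-loss term and matching the summands indexed by $i$ yields
\[
  \sfrT{k} = \sum_{i=1}^{n_k} \left(\sum_{t} \langle \sfell{t}{j_i}, \sfx{t}{j_i}\rangle - \min_{\tilde x_{j_i}\in\seqf{j_i}} \sum_{t} \langle \sfell{t}{j_i}, \tilde x_{j_i}\rangle\right) = \sum_{j\in\nextv{k}} \sfrT{j},
\]
which is the claim. Since no step loses an inequality, the result holds as an exact equality. I expect the only mild subtlety to be stating cleanly why the blocks $\{\downarrow j_i\}$ genuinely partition the coordinates below $k$ (so that both the inner-product splitting and the separability of the comparator are valid); everything else is bookkeeping.
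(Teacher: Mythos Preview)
Your proposal is correct and follows essentially the same approach as the paper: both arguments unfold the definition of $\sfrT{k}$, use the product structure~\eqref{eq:sf observation point}/\eqref{eq:sfx k} together with the loss slicing~\eqref{eq:sf loss slice} to split the inner products blockwise, and then invoke separability of the comparator minimization over the Cartesian product to obtain the exact equality. Your write-up is in fact slightly more explicit than the paper's about why the $\min$ distributes over the sum, but the underlying idea is identical.
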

\begin{proof}
  By definition of $\sfrT{k}$,
  \[
    \sfrT{k} = \sum_{t=1}^T \langle \sfell{t}{k}, \sfx{t}{k} \rangle - \min_{\tilde x^\triangle_k \in \seqf{k}} \sum_{t=1}^T \langle \sfell{t}{k}, \tilde x^\triangle_k\rangle.
  \]
  By using~\eqref{eq:sfx k} and~\eqref{eq:sf loss slice}, we can break the dot products and the minimization problem into independent parts, one for each $j\in\nextv{k}$:
  \begin{align*}
    \sfrT{k} &= \sum_{j\in \nextv{k}} \sum_{t=1}^T \langle \sfell{t}{j}, \sfx{t}{j} \rangle - \sum_{j\in\nextv{k}}\min_{\tilde x^\triangle_j \in \seqf{j}} \sum_{t=1}^T  \langle \sfell{t}{j}, \tilde x^\triangle_{j} \rangle\\
             &= \sum_{j\in \nextv{k}} \left(\sum_{t=1}^T \langle \sfell{t}{j}, \sfx{t}{j} \rangle - \min_{\tilde x^\triangle_j \in \seqf{j}} \sum_{t=1}^T  \langle \sfell{t}{j}, \tilde x^\triangle_{j} \rangle\right)\\
             &= \sum_{j\in \nextv{k}}\sfrT{j},
  \end{align*}
  as we wanted to show.
\end{proof}

\begin{lemma}\label{lem:regret bound ch}
  Let $j\in\cJ$ be a decision point. Then,
  $\displaystyle
    \sfrT{j} \le \hat R_j^T + \max_{k\in\nextv{j}} \sfrT{k}.
  $
\end{lemma}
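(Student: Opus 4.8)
The plan is to expand $\sfrT{j}$ directly from its definition, parametrizing $\seqf{j}$ via~\eqref{eq:sf decision point}, and to show that after subtracting $\hat R_j^T$ the remaining discrepancy is a difference of two linear programs over the simplex $\Delta^{n_j}$ that differ only by additive child-regret terms $\sfrT{k}$.

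First I would record the bookkeeping at the decision point $j$. Writing $k_a \defeq \rho(j,a)$, so that $\nextv{j} = \{k_a : a\in A_j\}$, the constructions~\eqref{eq:sfx j} and~\eqref{eq:sf loss slice} give, for every iteration $t$,
\[
  \langle \sfell{t}{j}, \sfx{t}{j}\rangle = \langle [\sfell{t}{j}]_j, \hat x^t_j\rangle + \sum_{a\in A_j}\hat x^t_{ja}\,\langle \sfell{t}{k_a}, \sfx{t}{k_a}\rangle = \langle \hat\ell^t_j, \hat x^t_j\rangle,
\]
the last equality being exactly the statement that the counterfactual loss vector~\eqref{eq:counterfactual loss vector} has components $[\hat\ell^t_j]_a = ([\sfell{t}{j}]_j)_a + \langle \sfell{t}{k_a}, \sfx{t}{k_a}\rangle$. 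Hence the \emph{played} part $\sum_{t}\langle \hat\ell^t_j, \hat x^t_j\rangle$ is common to both $\sfrT{j}$ and $\hat R_j^T$, and only the comparators will differ.

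Next I would evaluate the two comparators. Parametrizing an arbitrary $\tilde x^\triangle_j\in\seqf{j}$ by $(\lambda,\{\tilde x_{k_a}\}_{a\in A_j})$ with $\lambda\in\Delta^{n_j}$ and $\tilde x_{k_a}\in\seqf{k_a}$, and using $\lambda_a\ge 0$ so that the child minimizations decouple, I set $L^\star_k \defeq \min_{\tilde x_k\in\seqf{k}}\sum_t\langle\sfell{t}{k},\tilde x_k\rangle$ and $c_a\defeq \sum_t ([\sfell{t}{j}]_j)_a + L^\star_{k_a}$. Since a linear function on a simplex attains its minimum at a vertex,
\[
  \min_{\tilde x^\triangle_j\in\seqf{j}}\sum_{t=1}^T\langle\sfell{t}{j},\tilde x^\triangle_j\rangle = \min_{a\in A_j} c_a, \qquad \min_{\lambda\in\Delta^{n_j}}\sum_{t=1}^T\langle\hat\ell^t_j,\lambda\rangle = \min_{a\in A_j}\big(c_a + \sfrT{k_a}\big),
\]
where the second equality uses $\sum_t\langle\sfell{t}{k_a},\sfx{t}{k_a}\rangle = L^\star_{k_a} + \sfrT{k_a}$.

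Combining the pieces, the played parts cancel and I obtain the exact identity
\[
  \sfrT{j} - \hat R_j^T = \min_{a\in A_j}\big(c_a + \sfrT{k_a}\big) - \min_{a\in A_j} c_a.
\]
To conclude I pick $a^\star\in\argmin_a c_a$ and bound $\min_a(c_a+\sfrT{k_a}) \le c_{a^\star}+\sfrT{k_{a^\star}} = \min_a c_a + \sfrT{k_{a^\star}} \le \min_a c_a + \max_{k\in\nextv{j}}\sfrT{k}$, which rearranges to the claim; notably no sign assumption on the child regrets is needed. I expect the genuine work to lie in the first step—matching the sequence-form recursion~\eqref{eq:sfx j} against~\eqref{eq:counterfactual loss vector} so that the two played sums are literally equal, and verifying that the comparator over $\seqf{j}$ factorizes across children (which relies on $\lambda_a\ge 0$). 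Once the problem is reduced to comparing $\min_a c_a$ with $\min_a(c_a+\sfrT{k_a})$, evaluating the second minimum at the argmin of the first is routine, and it is precisely here that the $\max$ over children, rather than a sum, appears.
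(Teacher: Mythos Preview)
Your proposal is correct and follows essentially the same route as the paper: decompose $\sfrT{j}$ via the sequence-form recursion~\eqref{eq:sfx j}, rewrite the child comparator values using $\sum_t\langle\sfell{t}{k},\sfx{t}{k}\rangle = L^\star_k + \sfrT{k}$, identify the counterfactual regret $\hat R_j^T$, and bound the residual by $\max_{k\in\nextv{j}}\sfrT{k}$. The only cosmetic difference is that you reduce both simplex minimizations to their vertex form $\min_a c_a$ and $\min_a(c_a+\sfrT{k_a})$ up front---obtaining the exact identity $\sfrT{j}-\hat R_j^T = \min_a(c_a+\sfrT{k_a})-\min_a c_a$ before applying the inequality---whereas the paper keeps the simplex variable explicit and splits the minimization of a sum into a sum of minimizations; the content is the same.
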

\begin{proof}
  By definition of $\sfrT{j}$,
  \[
    \sfrT{j} = \sum_{t=1}^T \langle \sfell{t}{j}, \sfx{t}{j} \rangle - \min_{\tilde x^\triangle_j \in \seqf{j}} \sum_{t=1}^T \langle \sfell{t}{j}, \tilde x^\triangle_j\rangle.
  \]
  By combining~\eqref{eq:sfx j} and~\eqref{eq:sf loss slice}, we can break the dot products and the minimization problem into independent parts, one for each $k\in\nextv{j}$, as well as a part that depends solely on $\hat x_j$:
  \begin{align*}
    \sfrT{j} &=
                 \sum_{t=1}^T\left(
                    \langle [\sfell{t}{j}]_j, \hat x_j^t\rangle
                    + \smashoperator{\sum_{\substack{a\in A_j\\k=\rho(j,a)}}} \hat x^t_{ja}\langle \sfell{t}{k}, \sfx{t}{k} \rangle
                 \right)\\
                 &\hspace{2cm}
                 -\min_{\tilde x_j \in \Delta^{n_j}}\!\left\{\!
                    \left(\sum_{t=1}^T \langle [\sfell{t}{j}]_j, \tilde x_j\rangle\right)
                    +\smashoperator{\sum_{\substack{a\in A_j\\k=\rho(j,a)}}}  \tilde x_{ja} \left(\min_{\tilde x^\triangle_k \in \seqf{k}} \sum_{t=1}^T  \langle \sfell{t}{k}, \tilde x^\triangle_{k} \rangle\right)\!\!
                 \right\}\\
             &=
                \sum_{t=1}^T\left(
                    \langle [\sfell{t}{j}]_j, \hat x_j^t\rangle
                    + \smashoperator{\sum_{\substack{a\in A_j\\k=\rho(j,a)}}}  \hat x^t_{ja}\langle \sfell{t}{k}, \sfx{t}{k} \rangle
                 \right)\\
                 &\hspace{2cm}
                 -\min_{\tilde x_j \in \Delta^{n_j}}\!\left\{\!
                    \left(\sum_{t=1}^T \langle [\sfell{t}{j}]_j, \tilde x_j\rangle\right)
                    +\smashoperator{\sum_{\substack{a\in A_j\\k=\rho(j,a)}}}  \tilde x_{ja} \left(-\sfrT{k} + \sum_{t=1}^T \langle \sfell{t}{k}, \sfx{t}{k} \rangle\right)\!\!
                 \right\} \displaybreak \\
             &\le
                \sum_{t=1}^T\left(
                    \langle [\sfell{t}{j}]_j, \hat x_j^t\rangle
                    + \smashoperator{\sum_{\substack{a\in A_j\\k=\rho(j,a)}}}  \hat x^t_{ja}\langle \sfell{t}{k}, \sfx{t}{k} \rangle
                 \right)\\
                 &\hspace{1.8cm}
                 -\min_{\tilde x_j \in \Delta^{n_j}}\!\left\{\!
                    \sum_{t=1}^T \left(\langle [\sfell{t}{j}]_j, \tilde x_j\rangle
                    +\smashoperator{\sum_{\substack{a\in A_j\\k=\rho(j,a)}}}  \tilde x_{ja} \langle \sfell{t}{k}, \sfx{t}{k} \rangle\right)\!\!
                 \right\} + \max_{\tilde x_j \in \Delta^{n_j}} \sum_{a\in A_j} \tilde x_{ja} \sfrT{k},
  \end{align*}
  where the equality follows by the definition of $\sfrT{k}$, and the inequality follows from breaking the minimization of a sum into a sum of minimization problems. By identifying the difference between the first two terms as the counterfactual regret $\hat R^T_j$ (that is, the regret of $\localrm{j}$ up to time $T$), we obtain
  \begin{align*}
    \sfrT{j} &\le \hat R^T_j + \max_{\tilde x_j \in \Delta^{n_j}} \sum_{k\in\nextv{j}} \tilde x_{ja} \sfrT{k} = \hat R^T_j + \max_{k \in \nextv{j}} \sfrT{k},
  \end{align*}
  as we wanted to show.
\end{proof}

The two lemmas above do not make any assumption about the nature of the (localized) regret minimizers $\localrm{j}$, and therefore they are applicable even when the  $\localrm{j}$ are predictive or, specifically, stable-predictive.

\section{Stable-Predictive Counterfactual Regret Minimization}
Our proposed algorithm behaves exactly like CFR, with the notable difference that our local regret minimizers $\localrm{j}$ are stable-predictive and chosen to have specific stability parameters. Furthermore, the predictions $m_j^t$ for each local regret minimizer $\localrm{j}$ are chosen so as to leverage the predictivity property of the regret minimizers. Given a desired value of $\kappa^* > 0$, by choosing the stability parameters and predictions as we will detail later, we can guarantee that $\cR^\triangle$ is a \sprm{\kappa^*}{O(1)}{O(1)} regret minimizer.\footnote{Throughout the paper, our asymptotic notation is always with respect to the number of iterations $T$.}

\subsection{Choice of Stability Parameters}

We use the following scheme to pick the stability parameter of $\localrm{j}$. First, we associate a scalar $\gamma_v$ to each node $v\in\cJ \cup \cK$ of the sequential decision process. The value $\gamma_r$ of the root decision node is set to $\kappa^*$, and the value for each other node $v$ is set relative to the value $\gamma_u$ of their parent
\begin{equation}\label{eq:choice of gamma}
  \gamma_v \defeq \begin{cases}
    \displaystyle\frac{\gamma_u}{2\sqrt{n_u}} & \text{if } u\in \cJ\\[4mm]
    \displaystyle\frac{\gamma_u}{\sqrt{n_u}} & \text{if } u\in \cK.
  \end{cases}
\end{equation}
The stability parameter of each decision point $j\in \cJ$ is chosen according to
\begin{equation}\label{eq:choice of kappa}
  \kappa_j \defeq \frac{\gamma_j}{2\sqrt{n_j} B_j},
\end{equation}
where $B_j$ is an upper bound on the 2-norm of any vector in $\seqf{j}$. A suitable value of $B_j$ can be found by recursively using the following rules:
\begin{align}
  \forall k\in \cK,\quad B_k &= \sqrt{\sum_{j\in\nextv{k}} B_j^2} \nonumber\\
  \forall j\in \cJ,\quad B_j &= \sqrt{1 + \max_{k\in\nextv{j}} B^2_k} \label{eq:B_j}
\end{align}
At each decision point $j$, any stable-predictive regret minimizer that is able to guarantee the above stability parameter can be used. For example, one can use OFTRL where the stepsize $\eta$ is chosen appropriately. For example, assuming without loss of generality that all loss vectors involved have (dual) norm bounded by $1/3$, we can simply set the stepsize $\eta$ of the local OFTRL regret minimizer $\localrm{j}$ at decision point $j$ to be $\eta = \kappa_j$.

\subsection{Prediction of Counterfactual Loss Vectors}

Let $\sfm{t}{}$ be the prediction received by $\cR^\triangle$, concerning the future loss vector $\sfell{t}{}$. We will show how to process the prediction and produce \emph{counterfactual} prediction vectors $\hat m_j^t$ (one for each decision point $j\in\cJ$) for each local stable-predictive regret minimizer $\localrm{j}$.

Following the construction of the counterfactual loss functions defined in~\eqref{eq:counterfactual loss function}, for each decision point $j\in\cJ$ we define the \emph{counterfactual prediction function} ${\hat m}^{t,\circ}_{j} : \Delta^{n_j} \to \bbR$ as
\begin{equation*}
  \hat m^{t,\circ}_j : \Delta^{n_j} \ni  x_j = ( x_{ja_1}, \dots, x_{ja_{n_j}}) \mapsto \langle [\sfm{t}{}]_j,  x_j\rangle + \sum_{a\in A_j} \left( x_{ja} \sum_{j' \in \nextv{ja}} \langle [\sfm{t}{}]_{\downarrow j'}, [\sfx{t}{}]_{\downarrow j'} \rangle\right).
\end{equation*}

\textbf{Observation.} \textit{It important to observe that the counterfactual prediction function $\hat m_j^t$ depends on the decisions produced at time $t$ in the subtree rooted at $j$. In other words, in order to construct the prediction for what loss $\localrm{j}$ will observe \emph{after} producing the decision $x_j^t$, we use the ``future'' decisions $x^t_{ja}$ from the subtrees below $j\in J$.}

Similarly to what is done for the counterfactual loss function, we define the \emph{counterfactual loss prediction vector} ${\hat m}^t_{j}$, as the (unique) vector in $\bbR^n_j$ such that
\begin{equation}\label{eq:counterfactual prediction}
  {\hat m}^{t,\circ}_{j}(x_j) = \langle {\hat m}^t_{j}, x_j \rangle \quad\forall\, x_j \in \Delta^{n_j}.
\end{equation}

\subsection{Proof of Correctness}

We will prove that our choice of stability parameters~\eqref{eq:choice of gamma} and (localized) counterfactual loss predictions~\eqref{eq:counterfactual prediction} guarantee that $\cR^\triangle$ is a \sprm{\kappa^*}{O(1)}{O(1)} regret minimizer. Our proof is by induction on the sequential decision process structure: we prove that our choices yield a \sprm{\gamma_v}{O(1)}{O(1)} regret minimizer in the sub-sequential decision process rooted at each possible node $v \in \cJ\cup\cK$.
For observation nodes $v\in\cK$ the inductive step is performed via Lemma~\ref{lem:induction observation}, while for decision nodes $v\in\cJ$ the inductive step is performed via Lemma~\ref{lem:induction decision}.


We will prove both Lemma~\ref{lem:induction observation} and Lemma~\ref{lem:induction decision} with respect to the 2-norm. This does not come at the cost of generality, since all norms are equivalent on finite-dimensional vector spaces, that is, for every choice of norm $\|\cdot\|$, there exist constants $m,M>0$ such that for all $x$, $m\|x\|\le \|x\|_2 \le M\|x\|$.

\begin{lemma}\label{lem:induction observation}
  Let $k \in \cK$ be an observation node, and assume that $\sfrm{j}$ is a \sprm{\gamma_j}{O(1)}{O(1)} regret minimizer over the sequence-form strategy space $\seqf{j}$ for each $j\in\nextv{k}$. Then, $\sfrm{k}$ is a \sprm{\gamma_k}{O(1)}{O(1)} regret minimizer over the sequence-form strategy space $\seqf{k}$.
\end{lemma}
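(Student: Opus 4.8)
The plan is to verify the two defining properties of Definition~\ref{def:sprm} for $\sfrm{k}$ directly, exploiting the Cartesian-product structure $\seqf{k} = \seqf{j_1}\times\cdots\times\seqf{j_{n_k}}$ together with the scaling dictated by~\eqref{eq:choice of gamma}. The key structural fact is that, because the parent of each child $j\in\nextv{k}$ is the observation node $k\in\cK$, every child inherits the \emph{same} value $\gamma_j = \gamma_k/\sqrt{n_k}$. Throughout I work in the $2$-norm, which is self-dual, as justified by the remark preceding the lemma.

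\textbf{Stability.} Using the concatenation identity~\eqref{eq:sfx k}, the displacement of $\sfx{t}{k}$ splits coordinatewise across the children, so by additivity of the squared $2$-norm over disjoint blocks, $\|\sfx{t+1}{k}-\sfx{t}{k}\|_2^2 = \sum_{j\in\nextv{k}}\|\sfx{t+1}{j}-\sfx{t}{j}\|_2^2$. Bounding each summand by $\gamma_j^2$ via the inductive stability hypothesis and summing the $n_k$ identical terms gives $\sum_{j}\gamma_j^2 = n_k(\gamma_k/\sqrt{n_k})^2 = \gamma_k^2$, i.e. $\|\sfx{t+1}{k}-\sfx{t}{k}\|_2\le\gamma_k$, which is exactly stability at level $\gamma_k$. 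The point to appreciate is that the $\sqrt{n_k}$ in~\eqref{eq:choice of gamma} is calibrated precisely so that the quadrature sum telescopes back to $\gamma_k^2$.

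\textbf{Prediction bound.} Here I invoke Lemma~\ref{lem:regret bound cartesian product} to write $\sfrT{k} = \sum_{j\in\nextv{k}}\sfrT{j}$, and apply the inductive prediction bound $\sfrT{j}\le \alpha_j/\gamma_j + \beta_j\gamma_j\sum_{t=1}^T\|\sfell{t}{j}-\sfm{t}{j}\|_2^2$ to each child. For the constant part, substituting $\gamma_j=\gamma_k/\sqrt{n_k}$ gives $\sum_j \alpha_j/\gamma_j = (\sqrt{n_k}\sum_j\alpha_j)/\gamma_k$, of the form $\alpha_k/\gamma_k$ with $\alpha_k = \sqrt{n_k}\sum_j\alpha_j = O(1)$ in $T$ (a sum of $n_k$ constants, times $\sqrt{n_k}$). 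For the prediction-error part, the crucial observation is that by the slicing identity~\eqref{eq:sf loss slice} (applied to both losses and, analogously, predictions), $\sfell{t}{j}-\sfm{t}{j} = [\sfell{t}{k}-\sfm{t}{k}]_{\downarrow j}$, and the slices $\{[\cdot]_{\downarrow j}\}_{j\in\nextv{k}}$ partition the coordinates of the subtree rooted at $k$ into disjoint blocks, so $\sum_{j}\|\sfell{t}{j}-\sfm{t}{j}\|_2^2 = \|\sfell{t}{k}-\sfm{t}{k}\|_2^2$. Pulling out the common factor $\gamma_j=\gamma_k/\sqrt{n_k}$ and bounding $\beta_j\le\beta_{\max}$ yields $\sum_j\beta_j\gamma_j\sum_t\|\cdots\|_2^2 \le (\beta_{\max}/\sqrt{n_k})\,\gamma_k\sum_t\|\sfell{t}{k}-\sfm{t}{k}\|_2^2$, which is of the desired form with $\beta_k = \beta_{\max}/\sqrt{n_k}=O(1)$.

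The main obstacle is bookkeeping rather than depth. One must confirm that (i) predictions are forwarded at the observation node by the same coordinate partition as losses, so that $\sfm{t}{j}=[\sfm{t}{k}]_{\downarrow j}$ and the slice-sum identity genuinely applies to the difference $\sfell{t}{j}-\sfm{t}{j}$; and (ii) the single $\sqrt{n_k}$ factor in the $\gamma$-recursion does double duty — it is exactly what closes the quadrature sum in the stability step, while simultaneously keeping $\beta_k$ bounded and inflating $\alpha_k$ only by a factor that is constant in $T$. Everything else is a direct substitution into Definition~\ref{def:sprm}.
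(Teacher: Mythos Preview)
Your proof is correct and follows essentially the same approach as the paper: you verify stability via the quadrature sum over the Cartesian-product blocks using $\gamma_j=\gamma_k/\sqrt{n_k}$, and you verify the prediction bound by invoking Lemma~\ref{lem:regret bound cartesian product}, applying the inductive hypothesis to each child, and using that the loss/prediction slices partition the coordinates so that $\sum_{j}\|\sfell{t}{j}-\sfm{t}{j}\|_2^2=\|\sfell{t}{k}-\sfm{t}{k}\|_2^2$. Your explicit remark that predictions are forwarded by the same slicing as losses is a useful clarification that the paper leaves implicit.
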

\begin{proof}
  By hypothesis, for all $j \in \nextv{k}$ we have
  \begin{equation}\label{eq:io1}
    \sfrT{j} \le \frac{O(1)}{\gamma_j} + O(1)\gamma_j\sum_{t=1}^T \| \sfell{t}{j} -  \sfm{t}{j}\|_2^2
  \end{equation}
  and
  \begin{equation}\label{eq:io2}
    \| \sfx{t}{j} - \sfx{t-1}{j}\|_2 \le \gamma_j,
  \end{equation}
  where $\sfx{t}{j}$ is the decision output by $\sfrm{j}$ at time $t$.

  Substituting~\eqref{eq:io1} into the regret bound of Lemma~\ref{lem:regret bound cartesian product}:
  \begin{align}
    \sfrT{k} &\le O(1)\sum_{j\in\nextv{k}}\frac{1}{\gamma_j} + O(1)\sum_{j\in\nextv{k}}\sum_{t=1}^T \gamma_j \|\sfell{t}{j} - \sfm{t}{j}\|_2^2 \nonumber\\
      &\le O(1)\frac{n_k^{3/2}}{\gamma_k} + O(1)\frac{\gamma_k}{\sqrt{n_k}}\sum_{t=1}^T\sum_{j\in\nextv{k}} \| \sfell{t}{j} -  \sfm{t}{j}\|_2^2 \nonumber \\
      &= \frac{O(1)}{\gamma_k} + O(1)\gamma_k\sum_{t=1}^T \| \sfell{t}{k} -  \sfm{t}{k}\|_2^2\label{eq:io3}
  \end{align}
  where the second inequality comes from substituting the value $\gamma_j = \gamma_k/\sqrt{n_k}$ as per~\eqref{eq:choice of gamma}, and the equality comes from the fact that the $\sfell{t}{j}$ and $\sfm{t}{j}$ form a partition of the vectors $\sfell{t}{k}$ and $\sfm{t}{k}$, respectively.

  We now analyze the stability properties of $\sfrm{k}$:
  \begin{align*}
    \| \sfx{t}{k} -  \sfx{t-1}{k}\|_2 = \sqrt{\sum_{j\in\nextv{k}} \| \sfx{t}{j} -  \sfx{t-1}{j}\|^2_2} \le \sqrt{\sum_{j\in\nextv{k}} \gamma_j^2} = \gamma_k,
  \end{align*}
  where the first equality follows from~\eqref{eq:sf observation point}, the inequality holds by~\eqref{eq:io2} and the second equality holds by substituting the value $\gamma_j = \gamma_k/\sqrt{n_k}$ as per~\eqref{eq:choice of gamma}. This shows that $\sfrm{k}$ is $\gamma_k$-stable. Combining this with the predictivity bound~\eqref{eq:io3} above, we obtain the claim.
\end{proof}

\begin{lemma}\label{lem:induction decision}
  Let $j \in \cJ$ be a decision node, and assume that $\sfrm{k}$ is a \sprm{\gamma_k}{O(1)}{O(1)} regret minimizer over the sequence-form strategy space $\seqf{k}$ for each $k\in\nextv{j}$. Suppose further that $\localrm{j}$ is a \sprm{\kappa_j}{O(1)}{O(1)} regret minimizer over the simplex $\Delta^{n_j}$. Then, $\sfrm{j}$ is a \sprm{\gamma_k}{O(1)}{O(1)} regret minimizer over the sequence-form strategy space $\seqf{j}$.
\end{lemma}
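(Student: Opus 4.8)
The plan is to verify the two defining properties of a \sprm{\gamma_j}{O(1)}{O(1)} regret minimizer separately (I read the stability parameter in the conclusion as $\gamma_j$, matching the node $j$; ``$\gamma_k$'' appears to be a typo), leaning on the inductive decomposition~\eqref{eq:sfx j} of $\sfx{t}{j}$ into the local simplex decision $\hat x_j^t$ produced by $\localrm{j}$ together with the scaled child blocks $\hat x_{ja}^t\,\sfx{t}{\rho(j,a)}$. Throughout I would work in the $2$-norm and treat $n_j$, $B_j$, and the $B_k$ as constants with respect to $T$, so that they may be absorbed into the $O(1)$ factors.

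For stability I would expand $\|\sfx{t}{j}-\sfx{t-1}{j}\|_2^2 = \|\hat x_j^t-\hat x_j^{t-1}\|_2^2 + \sum_{a\in A_j}\|\hat x_{ja}^t\sfx{t}{k_a}-\hat x_{ja}^{t-1}\sfx{t-1}{k_a}\|_2^2$, with $k_a=\rho(j,a)$; the first term is at most $\kappa_j^2$. For each block I use the product-rule split $\hat x_{ja}^t\sfx{t}{k_a}-\hat x_{ja}^{t-1}\sfx{t-1}{k_a} = \hat x_{ja}^t(\sfx{t}{k_a}-\sfx{t-1}{k_a}) + (\hat x_{ja}^t-\hat x_{ja}^{t-1})\sfx{t-1}{k_a}$ and bound the two pieces by the child stability $\|\sfx{t}{k_a}-\sfx{t-1}{k_a}\|_2\le\gamma_{k_a}$ and the norm bound $\|\sfx{t-1}{k_a}\|_2\le B_{k_a}\le B_j$. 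Applying Minkowski's inequality to the sum of squared blocks, together with $\sum_a(\hat x_{ja}^t)^2\le1$ and $\sum_a(\hat x_{ja}^t-\hat x_{ja}^{t-1})^2\le\kappa_j^2$, bounds the total block contribution by $(\gamma_{k_a}+B_j\kappa_j)^2$. The choices $\gamma_{k_a}=\gamma_j/(2\sqrt{n_j})$ in~\eqref{eq:choice of gamma} and $\kappa_j=\gamma_j/(2\sqrt{n_j}B_j)$ in~\eqref{eq:choice of kappa} are calibrated so that $\gamma_{k_a}=B_j\kappa_j=\gamma_j/(2\sqrt{n_j})$, whence the blocks together contribute at most $\gamma_j^2/n_j$; adding $\kappa_j^2$ and checking the boundary case $n_j=1$ (where the local change vanishes) separately keeps $\|\sfx{t}{j}-\sfx{t-1}{j}\|_2^2\le\gamma_j^2$. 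This is precisely where the two factors of $2$ in the definitions are spent.

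For the prediction bound I would invoke Lemma~\ref{lem:regret bound ch}, $\sfrT{j}\le\laminarregret{j}{T}+\max_{k\in\nextv{j}}\sfrT{k}$. The child term is handled by the inductive hypothesis: bounding the max by the sum over $k$, using that the $n_j$ children all satisfy $\gamma_k=\gamma_j/(2\sqrt{n_j})\le\gamma_j$ and that by~\eqref{eq:sf loss slice} the slices $\{\sfell{t}{k}-\sfm{t}{k}\}_{k}$ partition $\sfell{t}{j}-\sfm{t}{j}$ below $j$ (so $\sum_k\|\sfell{t}{k}-\sfm{t}{k}\|_2^2\le\|\sfell{t}{j}-\sfm{t}{j}\|_2^2$), this term is at most $O(1)/\gamma_j + O(1)\gamma_j\sum_t\|\sfell{t}{j}-\sfm{t}{j}\|_2^2$. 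For the local counterfactual regret, the stable-predictivity of $\localrm{j}$ gives $\laminarregret{j}{T}\le O(1)/\kappa_j + O(1)\kappa_j\sum_t\|\hat\ell_j^t-\hat m_j^t\|_2^2$, and $1/\kappa_j = 2\sqrt{n_j}B_j/\gamma_j = O(1)/\gamma_j$ takes care of the constant term.

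The crux, and the step I expect to be the main obstacle, is bounding the counterfactual prediction error $\|\hat\ell_j^t-\hat m_j^t\|_2^2$. Here the key is the Observation that $\hat m_j^t$ reuses the \emph{same} subtree decisions $[\sfx{t}{}]_{\downarrow j'}$ as $\hat\ell_j^t$: comparing~\eqref{eq:counterfactual loss function} with~\eqref{eq:counterfactual prediction}, the decision factors cancel and the $a$-th coordinate of $\hat\ell_j^t-\hat m_j^t$ equals $[\sfell{t}{}-\sfm{t}{}]_{ja} + \sum_{j'\in\nextv{k_a}}\langle[\sfell{t}{}-\sfm{t}{}]_{\downarrow j'},[\sfx{t}{}]_{\downarrow j'}\rangle$. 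Aggregating the continuation sum over the children of $k_a$ and applying Cauchy--Schwarz with $\|[\sfx{t}{}]_{\downarrow k_a}\|_2\le B_{k_a}\le B_j$, then squaring and summing over $a$, yields $\|\hat\ell_j^t-\hat m_j^t\|_2^2 \le O(1)(1+B_j^2)\|\sfell{t}{j}-\sfm{t}{j}\|_2^2$. Multiplying by $\kappa_j=\gamma_j/(2\sqrt{n_j}B_j)$, the factor $B_j^2$ is cancelled by the $B_j$ in the denominator, leaving $\kappa_j\|\hat\ell_j^t-\hat m_j^t\|_2^2 = O(1)\gamma_j\|\sfell{t}{j}-\sfm{t}{j}\|_2^2$. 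Combining the local and child contributions then gives the desired $\sfrT{j}\le O(1)/\gamma_j + O(1)\gamma_j\sum_t\|\sfell{t}{j}-\sfm{t}{j}\|_2^2$. The delicate points are (i) that stability must hold with the \emph{exact} constant $\gamma_j$ rather than up to a factor, which forces the careful accounting above, and (ii) reconciling the global-versus-subtree normalization implicit in the slice notation $[\sfx{t}{}]_{\downarrow j'}$, since the scaling $\hat x_{ja}^t\le1$ may only help in the bounds.
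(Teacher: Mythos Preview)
Your proposal is correct and follows essentially the same route as the paper: decompose via~\eqref{eq:sfx j}, use Lemma~\ref{lem:regret bound ch} plus the child hypothesis and the local stable-predictivity for the regret bound, and bound the counterfactual prediction error via Cauchy--Schwarz using $\|\sfx{t}{k}\|_2\le B_k$. The only cosmetic differences are that the paper uses the cruder $(a+b)^2\le 2a^2+2b^2$ split for stability (which avoids your $n_j=1$ case distinction and lands exactly on $\gamma_j^2/2+\gamma_j^2/2$), and keeps the $\max_k$ from Lemma~\ref{lem:regret bound ch} rather than summing over children; neither changes the argument.
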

\begin{proof}
    By hypothesis, for all $k \in \nextv{j}$ we have
  \begin{equation}\label{eq:id1}
    \sfrT{k} \le \frac{O(1)}{\gamma_k} + O(1)\gamma_k\sum_{t=1}^T \| \sfell{t}{k} -  \sfm{t}{k}\|_2^2
  \end{equation}
  and
  \begin{equation}\label{eq:id2}
    \|\sfx{t}{k} - \sfx{t-1}{k}\|_2 \le \gamma_k.
  \end{equation}
  We substitute~\eqref{eq:id1} into the regret bound of Lemma~\ref{lem:regret bound ch}. The key observation is that the loss vector---and their predictions---entering the subtree rooted at ${k}$ ($k\in\nextv{j}$) are simply forwarded from ${j}$; with this, we obtain:
  \begin{equation}\label{eq:id3}
    R_{\subt{j}}^T \le \hat R_j^T + \frac{O(1)}{\gamma_k} + O(1)\gamma_k\sum_{t=1}^T \| \sfell{t}{j} - \sfm{t}{j}\|_2^2.
  \end{equation}
  On the other hand, by hypothesis $\localrm{j}$ is a \sprm{\kappa_j}{O(1)}{O(1)} regret minimizer. Hence,
  \begin{align}
    \hat R_j^T &\le \frac{O(1)}{\kappa_j} + O(1)\kappa_j \sum_{t=1}^T \| \hat\ell_j^t -  \hat m_{j}^t\|_2^2 \nonumber\\
               &= \frac{O(1)}{\gamma_j} + O(1) \gamma_j \sum_{t=1}^T \| \sfell{t}{j} -  \sfm{t}{j}\|_2^2, \label{eq:id4}
  \end{align}
  where the equality comes from the definition of $\kappa_j$ (Equation~\eqref{eq:choice of kappa}) and the fact that
  \begin{align*}
    \| \hat\ell_j^t -  \hat m_{j}^t\|_2^2
    &\le \sum_{k\in\nextv{j}} \|\sfx{t}{k}\|_2^2\cdot \|\sfell{t}{k} - \sfm{t}{k}\|_2^2 \\
    &\le \|\sfell{t}{j} - \sfm{t}{j}\|_2^2 \sum_{k\in\nextv{j}} B^2_k\\
    &= O(1) \|\sfell{t}{j} - \sfm{t}{j}\|_2^2.
  \end{align*}
  By substituting~\eqref{eq:id4} into~\eqref{eq:id3} and noting that $\gamma_k = O(1)\gamma_j$, we obtain
  \[
    \sfrT{j} \le \frac{O(1)}{\gamma_j} + O(1)\gamma_j\sum_{t=1}^T \| \sfell{t}{j} - \sfm{t}{j}\|_2^2,
  \]
  which establishes the predictivity of $\sfrm{j}$.

  To conclude the proof, we show that $\sfrm{j}$ has stability parameter $\gamma_j$. To this end, note that by~\eqref{eq:sf decision point}
  \begin{align*}
      \| \sfx{t}{j} -  \sfx{t-1}{j}\|^2_2
           &=\left\|\left(\sum_{a\in A_j} {\hat x_{ja}^t  \sfx{t}{ja}}\right) - \left(\sum_{a\in A_j} \hat x_{ja}^{t-1} \sfx{t-1}{ja}\right) \right\|_2^2 + \|\hat x_j^t - \hat x_j^{t-1}\|_2^2\\
           &\le \|\hat x^t_j - \hat x^{t-1}_j\|_2^2 \left(1 + 2\sum_{k\in \nextv{j}} \|\sfx{t}{k}\|_2^2\right) + 2 \sum_{k\in\nextv{k}}\| \sfx{t}{k} - \sfx{t-1}{k}\|^2_2\\
           &\le 2 n_j B_j^2\|\hat x^t_j - \hat x^{t-1}_j\|_2^2 + 2\sum_{k\in\nextv{k}}\| \sfx{t}{k} -  \sfx{t-1}{k}\|^2_2,
  \end{align*}
  where we have used the Cauchy-Schwarz inequality and the definition of $B_j$ (Equation~\ref{eq:B_j}). By using the stability of $\localrm{j}$, that is $\|\hat x^t_j - \hat x^{t-1}_j\|_2^2 \le \kappa^2_j = \gamma^2_j/(4n_j B_j^2)$, as well as the hypothesis~\eqref{eq:id2} and~\eqref{eq:choice of gamma}:
  \begin{align*}
    \| \sfx{t}{j} -  \sfx{t-1}{j}\|_2 &\le \frac{\gamma_j^2}{2} + 2\sum_{k\in\nextv{j}} \left(\frac{\gamma_j}{2\sqrt{n_j}}\right)^2 = \frac{\gamma_j^2}{2} + 2n_j \left(\frac{\gamma_j}{2\sqrt{n_j}}\right)^2 = \gamma_j^2.\\
  \end{align*}
  Hence, $\sfrm{j}$ has stability parameter $\gamma_j$ as we wanted to show.
\end{proof}

Putting together Lemma~\ref{lem:induction observation} and Lemma~\ref{lem:induction decision}, and using induction on the sequential decision process structure, we obtain the following formal statement.

\begin{corollary}
  Let $\kappa^* > 0$. If:
  \begin{enumerate}
   \item Each localized regret minimizer $\localrm{j}$ is \sprm{\kappa_j}{O(1)}{O(1)} and produces decisions over the local (simplex) action space $\Delta^{n_j}$, where $\kappa_j$ is as in~\eqref{eq:choice of kappa}; and
   \item $\localrm{j}$ observes the counterfactual loss prediction $\hat m_j^t$ as defined in~\eqref{eq:counterfactual prediction}; and
   \item $\localrm{j}$ observes the counterfactual loss vectors $\hat \ell_j^t$ as defined in~\eqref{eq:counterfactual loss vector},
  \end{enumerate}
  then $\cR^\triangle$ is a \sprm{\kappa^*}{O(1)}{O(1)} regret minimizer that operates over the sequence-form strategy space $\tilde X$.
\end{corollary}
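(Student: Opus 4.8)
The plan is to prove, by structural induction on the sequential decision process, the stronger statement that for every node $v \in \cJ \cup \cK$ the subtree regret minimizer $\sfrm{v}$ is \sprm{\gamma_v}{O(1)}{O(1)} over $\seqf{v}$, where $\gamma_v$ is the scalar assigned by~\eqref{eq:choice of gamma}. Since the root $r$ has $\gamma_r = \kappa^*$ by construction and $\cR^\triangle = \sfrm{r}$, instantiating this claim at $v = r$ yields exactly the corollary. I would traverse the tree bottom-up, so that whenever a node is processed all of its children have already been shown to satisfy the inductive hypothesis.

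For the base case I would handle the leaf decision nodes $j$ (those with $\nextv{j} = \emptyset$): here $\seqf{j} = \Delta^{n_j}$, the subtree minimizer $\sfrm{j}$ coincides with the local minimizer $\localrm{j}$, and the counterfactual loss and prediction reduce to the plain sliced vectors $[\sfell{t}{j}]_j$ and $[\sfm{t}{j}]_j$. Because $B_j = 1$ at a leaf, the stability parameter $\kappa_j = \gamma_j/(2\sqrt{n_j})$ from~\eqref{eq:choice of kappa} is at most $\gamma_j$, so the assumed $\kappa_j$-stability of $\localrm{j}$ immediately gives $\gamma_j$-stability; and since $1/\kappa_j = 2\sqrt{n_j}/\gamma_j = O(1)/\gamma_j$ while $\kappa_j \le \gamma_j$ (with $n_j$ a $T$-independent constant), the prediction bound of $\localrm{j}$ is already of the required \sprm{\gamma_j}{O(1)}{O(1)} form. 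In fact this base case is subsumed by Lemma~\ref{lem:induction decision} applied with the vacuous hypothesis $\nextv{j} = \emptyset$, so no separate argument is strictly necessary.

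The inductive step splits into two cases, one for each node type, and each is exactly one of the two lemmas already established. At an observation node $k$, assuming every child $\sfrm{j}$ ($j \in \nextv{k}$) is \sprm{\gamma_j}{O(1)}{O(1)}, Lemma~\ref{lem:induction observation} shows that the Cartesian-product minimizer $\sfrm{k}$ is \sprm{\gamma_k}{O(1)}{O(1)}. At a decision node $j$, assuming every child $\sfrm{k}$ ($k\in\nextv{j}$) is \sprm{\gamma_k}{O(1)}{O(1)} and that $\localrm{j}$ is \sprm{\kappa_j}{O(1)}{O(1)} with $\kappa_j$ from~\eqref{eq:choice of kappa}, Lemma~\ref{lem:induction decision} gives that $\sfrm{j}$ is \sprm{\gamma_j}{O(1)}{O(1)}. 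This is precisely where conditions (2) and (3) of the statement enter: they guarantee that $\localrm{j}$ receives the counterfactual prediction $\hat m_j^t$ and loss $\hat\ell_j^t$, which is what makes the loss-forwarding identity~\eqref{eq:sf loss slice} and the bound $\|\hat\ell_j^t - \hat m_j^t\|_2^2 \le O(1)\,\|\sfell{t}{j} - \sfm{t}{j}\|_2^2$ hold.

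The one point that requires care — and the main obstacle beyond mechanically chaining the two lemmas — is to verify that the hidden constants in the $O(1)$ terms do not accumulate into something that depends on $T$. Each inductive step multiplies the constants by factors built from $n_j$, $n_k$, and $B_j$ (through~\eqref{eq:choice of gamma}, \eqref{eq:choice of kappa}, and~\eqref{eq:B_j}), all of which are quantities fixed by the game tree and independent of the iteration count $T$. Since the tree has fixed, finite depth and finite branching, the induction terminates after a number of steps that does not depend on $T$, so the constant accumulated at the root is a finite product of $T$-independent factors and therefore remains $O(1)$ in $T$, consistent with the paper's convention that asymptotics are taken with respect to $T$. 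Collecting the root case $v = r$ then yields that $\cR^\triangle$ is \sprm{\kappa^*}{O(1)}{O(1)}, as claimed.
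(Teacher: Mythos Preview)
Your proposal is correct and follows essentially the same approach as the paper: structural induction on the decision process, invoking Lemma~\ref{lem:induction observation} at observation nodes and Lemma~\ref{lem:induction decision} at decision nodes, then instantiating at the root where $\gamma_r = \kappa^*$. Your treatment is in fact more thorough than the paper's one-line justification, since you explicitly handle the leaf base case and verify that the accumulated constants remain $T$-independent.
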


By combining the above result with the arguments of Section~\ref{sec:bspp}, we conclude that by constructing two \sprm{\Theta(T^{1/4})}{O(1)}{O(1)} regret minimizers, one per player, using the construction above, we obtain an algorithm that can approximate a Nash equilibrium and at time $T$ the average strategy produces an $O(T^{-3/4})$-Nash equilibrium in a two-player zero-sum game.

\section{Experiments}

Our techniques are evaluated in the benchmark domain of heads-up no-limit Texas hold'em poker (HUNL) subgames. In HUNL, there are two players $P_1$ and $P_2$ that each start the game with \$20,000. The position of the players switches after each hand. The players alternate taking turns and may choose to either fold, call, or raise on their turn. Folding results in the player losing and the money in the pot being awarded to the other player. Calling means the player places a number of chips in the pot equal to the opponent's share. Raising means the player adds more chips to the pot than the opponent's share. There are four betting rounds in the game. A round ends when both players have acted at least once and the most recent player has called. Players cannot raise beyond the \$20,000 they start with. All raises must be at least \$100 and at least as larger as any previous raise in that round.

At the start of the game $P_1$ must place \$100 in the pot and $P_2$ must place \$50 in the pot. Both players are then dealt two cards that only they observe from a 52-card deck. A round of betting then occurs starting with $P_2$. $P_1$ will be the first to act in all subsequent betting rounds. Upon completion of the first betting round, three \emph{community} cards are dealt face up. After the second betting round is over, another community card is dealt face up. Finally, after that betting round one more community card is revealed and a final betting round occurs. If no player has folded then the player with the best five-card poker hand, out of their two private cards and the five community cards wins the pot. The pot is split evenly if there is a tie.

The most competitive agents for HUNL solve portions of the game (referred to as subgames) in real time during play~\citep{Brown17:Safe,Moravvcik17:DeepStack,Brown17:Superhuman,Brown18:Depth}. For example, \emph{Libratus} solved in real time the remainder of HUNL starting on the third betting round. We conduct our experiments on four open-source subgames solved by \emph{Libratus} in real time during its competition against top humans in HUNL.\footnote{\url{https://github.com/CMU-EM/LibratusEndgames}} Following prior convention, we use the bet sizes of 0.5$\times$ the size of the pot, 1$\times$ the size of the pot, and the all-in bet for the first bet of each round. For subsequent bets in a round, we consider 1$\times$ the pot and the all-in bet.

Subgames 1 and 2 occur over the third and fourth betting round. Subgame 1 has \$500 in the pot at the start of the game while Subgame 2 has \$4,780.
Subgames 3 and 4 occur over only the fourth betting round. Subgame 1 has \$500 in the pot at the start of the game while Subgame 4 has \$3,750. We measure exploitability in terms of the standard metric: milli big blinds per game (mbb/g), which is the number of big blinds ($P_1$'s original contribution to the pot) lost per hand of poker multiplied by 1,000 and is the standard measurement of win rate in the related literature.

We compare the performance of three algorithms: vanilla CFR (i.e. CFR with regret matching; labeled CFR in plots), the current state-of-the-art algorithm in practice, Discounted CFR~\citep{Brown19:Solving} (labeled DCFR in plots), and our stable-predictive variant of CFR with OFTRL at each decision point (labeled OFTRL in plots). For OFTRL we use the stepsize that the theory suggests in our experiments on subgames 3 and 4 (labeled OFTRL theory). For subgames 1 and 2 we found that the theoretically-correct stepsize is much too conservative, so we also show results with a less-conservative parameter found through dividing the stepsize by 10, 100, and 1000, and picking the best among those (labeled OFTRL tuned). For all games we show two plots: one where all algorithms use simultaneous updates, as CFR traditionally uses, and one where all algorithms use alternating updates, a practical change that usually leads to better performance.

\begin{figure}[!h]
  \centering
  \includegraphics[width=.49\linewidth]{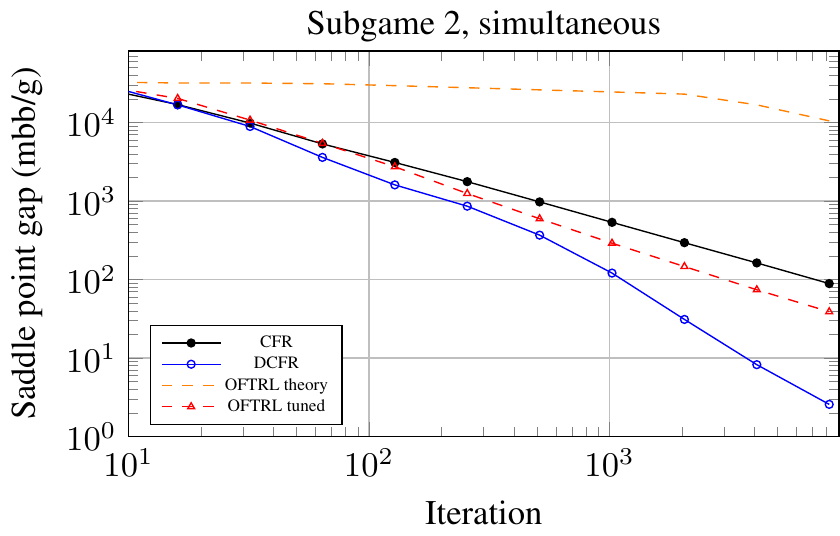}\hfill
  \includegraphics[width=.49\linewidth]{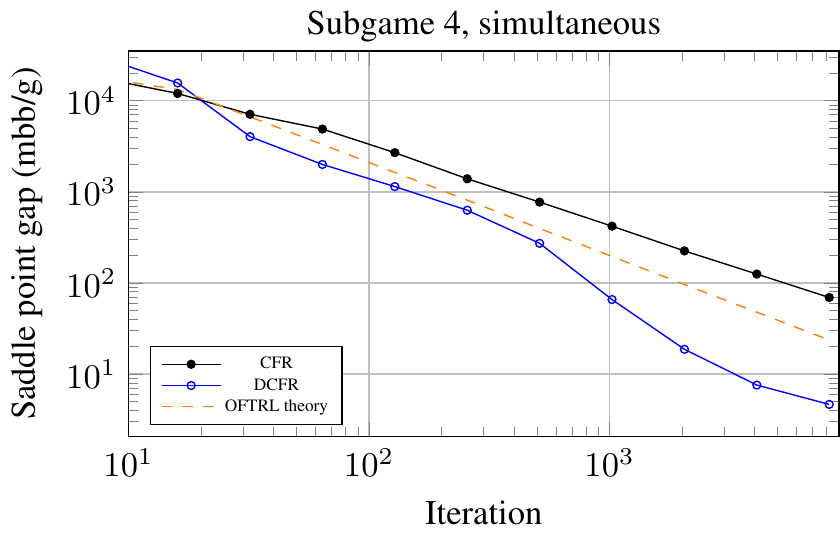}
  \caption{Convergence rate with iterations on the x-axis, and the exploitability in mbb. All algorithms use simultaneous updates.}
  \label{fig:plots simultaneous 2 4}
\end{figure}

\begin{figure}[H]
  \centering
  \includegraphics[width=.49\linewidth]{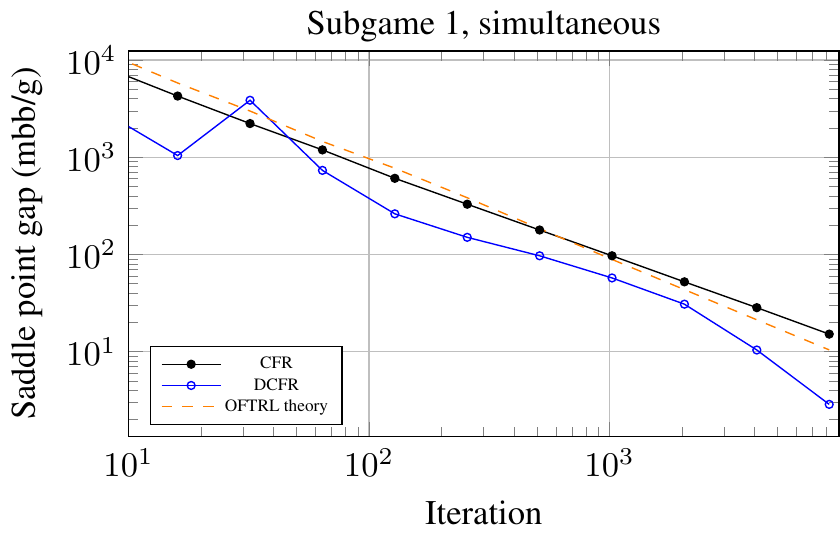}\hfill
  \includegraphics[width=.49\linewidth]{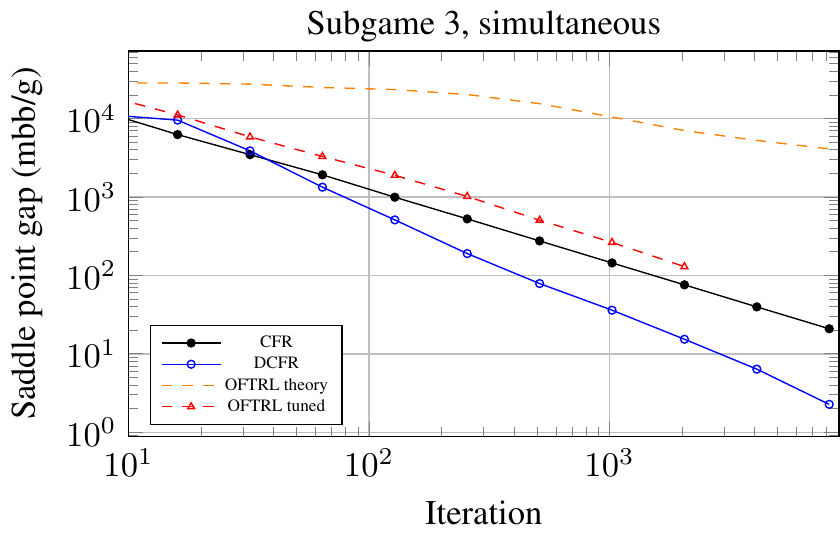}
  \caption{Convergence rate with iterations on the x-axis, and the exploitability in mbb. All algorithms use simultaneous updates.}
  \label{fig:plots simultaneous 1 3}
\end{figure}

Figure~\ref{fig:plots simultaneous 2 4} shows the results for simultaneous updates on subgames 2 and 4, while Figure~\ref{fig:plots simultaneous 1 3} for games 1 and 3. In the smaller subgames 3 and 4 we find that OFTRL with the stepsize set according to our theory outperforms CFR: in subgame 4 almost immediately and significantly, in subgame 3 only after roughly 800 iterations. In contrast to this we find that in the larger subgames 1 and 2 the OFTRL stepsize is much too conversative, and the algorithm barely starts to make progress within the number of iterations that we run. With a moderately-hand-tuned stepsize OFTRL beats CFR somewhat significantly. In all games DCFR performs better than OFTRL, and also significantly better than its theory predicts. This is not too surprising, as both {\cfrp} and the improved DCFR are known to significantly outperform their theoretical convergence rate in practice.

Figure~\ref{fig:plots alternating 2 4} shows the results for alternating updates on subgames 2 and 4, while games 1 and 3 are given in Figure~\ref{fig:plots alternating 1 3}. In the alternating-updates setting OFTRL performs worse relative to CFR and DCFR. In subgame 1 OFTRL with stepsizes set according to the theory slightly outperforms CFR, but in subgame 2 they have near-identical performance. In subgames 3 and 4 even the manually-tuned variant performs worse than CFR, although we suspect that it is possible to improve on this with a better choice of stepsize parameter. In the alternating setting DCFR performs significantly better than all other algorithms.

\begin{figure}[!h]
  \centering
  \includegraphics[width=.49\linewidth]{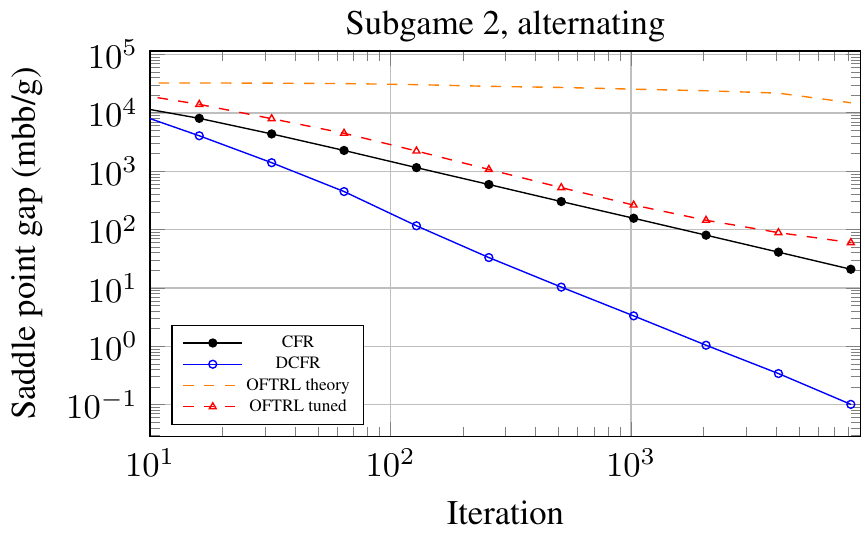}\hfill
  \includegraphics[width=.49\linewidth]{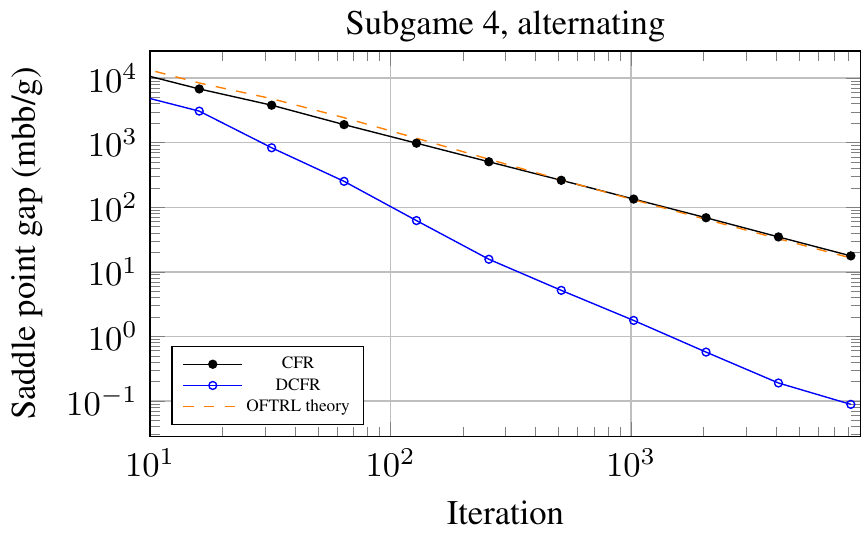}
  \caption{Convergence rate with iterations on the x-axis, and the exploitability in mbb. All algorithms use alternating updates.}
  \label{fig:plots alternating 2 4}
\end{figure}
\begin{figure}[H]
  \centering
  \includegraphics[width=.49\linewidth]{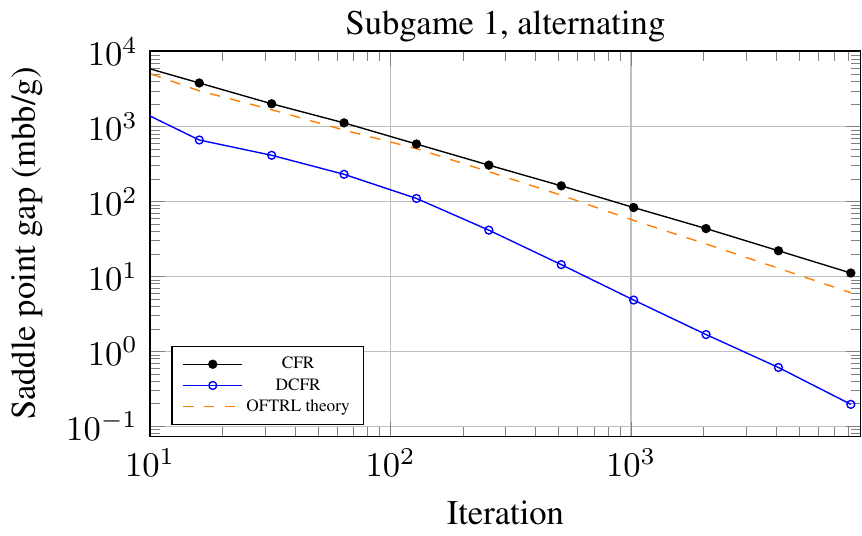}\hfill
  \includegraphics[width=.49\linewidth]{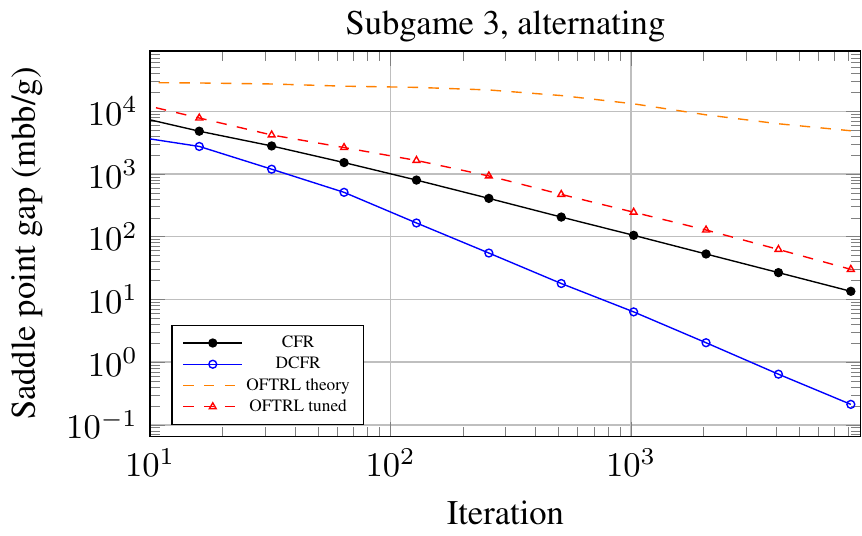}
  \caption{Convergence rate with iterations on the x-axis, and the exploitability in mbb. All algorithms use alternating updates.}
  \label{fig:plots alternating 1 3}
\end{figure}

\section{Conclusions}

We developed the first variant of CFR that converges at a rate better than $T^{-1/2}$. In particular we extend the ideas of predictability and stability for optimistic regret minimization on matrix games to the setting of EFGs. In doing so we showed that stable-predictive simplex regret minimizers can be aggregated to form a stable-predictive variant of CFR for sequential decision making, and we showed that this leads to a convergence rate of $O(T^{-3/4})$ for solving two-player zero-sum EFGs. Our result makes the first step towards reconciling the gap between the theoretical rate at which CFR converges, and the rate at which $O(T^{-1})$ first-order methods converge.

Experimentally we showed that our CFR variant can outperform CFR on some games,
but that the choice of stepsize is important, while we find that DCFR is faster
in practice. An important direction for future work is to find variants of our
algorithm that still satisfy the theoretical guarantee and perform even better
in practice.

\bibliography{dairefs}
\bibliographystyle{custom_arxiv}

\clearpage
\appendix
\section{Stable-predictivity of OFTRL}
We offer a proof of Theorem~\ref{eq:oftrl choice}.

  First, we introduce the following argmin-function:
  \begin{equation}\label{eq:argmin function}
    \tilde x : L \mapsto \argmin_{ x\in\cX} \left\{\left\langle  x, L \right\rangle + \frac{1}{\eta}R( x)\right\}.
  \end{equation}
  Furthermore, let $L^t \defeq \sum_{\tau=1}^{t} \ell^\tau$. With this notation, the decisions produced by OFTRL, as defined in~\eqref{eq:oftrl choice}, can be expressed as $x^t = \tilde x(L^{t-1} + m^t)$.

\textbf{Continuity of the argmin-function}. The first step in the proof is to study the continuity of the argmin-function $\tilde x$. Intuitively, the role of the regularizer $R$ is to \emph{smooth out} the linear objective function $\langle \cdot, L\rangle$. So, it seems only reasonable to expect that, the higher the constant that multiplies $R$, the less the argmin $\tilde x(L)$ is affected by small changes of $L$. In fact, the following holds:
\begin{lemma}\label{lem:x tilde lipschitz}
  The argmin-function $\tilde x$ is $\eta$-Lipschitz continuous with respect to the dual norm, that is
  \[
    \|\tilde x(L) - \tilde x(L')\| \le \eta \|L - L'\|_*.
  \]
\end{lemma}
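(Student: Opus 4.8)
The plan is to exploit the fact that the two objectives minimized in defining $\tilde x(L)$ and $\tilde x(L')$ share the same $\frac{1}{\eta}$-strongly convex part $\frac{1}{\eta}R$ and differ only in a linear term. Write $x \defeq \tilde x(L)$ and $x' \defeq \tilde x(L')$, and set $f(z) \defeq \langle z, L\rangle + \frac{1}{\eta}R(z)$ and $f'(z) \defeq \langle z, L'\rangle + \frac{1}{\eta}R(z)$, so that $x = \argmin_{z\in\cX} f(z)$ and $x' = \argmin_{z\in\cX} f'(z)$. Since $R$ is $1$-strongly convex with respect to $\|\cdot\|$ and the added linear terms contribute no curvature, both $f$ and $f'$ are $\frac{1}{\eta}$-strongly convex over $\cX$.

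First I would invoke the standard \emph{minimizer-gap} inequality for strongly convex functions: if $g$ is $\mu$-strongly convex over $\cX$ and $w^\star = \argmin_{z\in\cX} g(z)$, then $g(w) \ge g(w^\star) + \frac{\mu}{2}\|w - w^\star\|^2$ for every $w\in\cX$. Applying this to $f$ at the feasible point $x'$ and to $f'$ at the feasible point $x$ yields
\begin{align*}
  \langle x', L\rangle + \tfrac{1}{\eta}R(x') &\ge \langle x, L\rangle + \tfrac{1}{\eta}R(x) + \tfrac{1}{2\eta}\|x' - x\|^2,\\
  \langle x, L'\rangle + \tfrac{1}{\eta}R(x) &\ge \langle x', L'\rangle + \tfrac{1}{\eta}R(x') + \tfrac{1}{2\eta}\|x - x'\|^2.
\end{align*}
Adding the two inequalities cancels the $R$ terms entirely and, after collecting the linear terms, leaves $\langle x' - x,\, L - L'\rangle \ge \frac{1}{\eta}\|x' - x\|^2$.

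To finish, I would bound the left-hand side using the defining property of dual norms (the generalized Cauchy--Schwarz inequality) $\langle x' - x,\, L - L'\rangle \le \|x' - x\|\,\|L - L'\|_*$, so that $\frac{1}{\eta}\|x' - x\|^2 \le \|x' - x\|\,\|L - L'\|_*$. Dividing through by $\|x' - x\|$ (the claim being trivial when $x = x'$) gives $\|x' - x\| \le \eta\,\|L - L'\|_*$, which is exactly the assertion.

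The only delicate point is keeping the strong-convexity constant straight: $R$ is $1$-strongly convex, hence $\frac{1}{\eta}R$ is $\frac{1}{\eta}$-strongly convex, and it is precisely this $\frac{1}{\eta}$ that surfaces as the reciprocal of the Lipschitz constant $\eta$ after the final division. I would favor the optimality-gap route above over a subgradient / first-order-optimality argument exactly because it requires no differentiability of $R$ (which the paper does not assume) and makes the tracking of the constant transparent.
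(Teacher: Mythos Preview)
Your argument is correct and leads to the same key inequality $\frac{1}{\eta}\|x'-x\|^2 \le \|x'-x\|\,\|L-L'\|_*$ as the paper, but the route is genuinely a bit different. The paper writes the first-order optimality (variational) inequalities for $\tilde x(L)$ and $\tilde x(L')$, sums them, and then invokes the strong-monotonicity form of $1$-strong convexity, $\langle \nabla R(x)-\nabla R(x'),\,x-x'\rangle \ge \|x-x'\|^2$, together with generalized Cauchy--Schwarz. You instead use the quadratic-growth (minimizer-gap) consequence of strong convexity on $f$ and $f'$ and add those two inequalities, so that the $R$-terms cancel without ever touching $\nabla R$. The payoff of your version is exactly the point you flag: it never assumes $R$ is differentiable, whereas the paper's proof writes $\nabla R$ explicitly even though differentiability is not stated among the hypotheses. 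The paper's variational-inequality route, on the other hand, is the more standard prox-operator argument and makes the Lipschitz constant pop out in one line once the gradient inequality is applied. Both are short and yield the same constant $\eta$; yours is arguably the more robust formulation here.
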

\begin{proof}
  The variational inequality for the optimality of $\tilde x(L)$ implies
  \begin{equation}\label{eq:vi1}
    \left\langle L + \frac{1}{\eta}\nabla R(\tilde x(L)), \tilde x(L') - \tilde x(L) \right\rangle \ge 0.
  \end{equation}
  Symmetrically for $\tilde x(L')$, we find that
  \begin{equation}\label{eq:vi2}
    \left\langle L' + \frac{1}{\eta}R(\tilde x(L')), \tilde x(L) - \tilde x(L') \right\rangle \ge 0.
  \end{equation}
  Summing inequalities~\ref{eq:vi1} and~\ref{eq:vi2}, we obtain
  \begin{align*}
    &\frac{1}{\eta} \left\langle \nabla R(\tilde x(L)) - \nabla R(\tilde x(L')), \tilde x(L) - \tilde x(L') \right\rangle \le
    \left\langle L' - L, \tilde x(L) - \tilde x(L') \right\rangle.
  \end{align*}
  Using strong convexity of $R(\cdot)$ on the left-hand side and the generalized Cauchy-Schwarz inequality on the right-hand side, we obtain
  \begin{equation*}
    \frac{1}{\eta}\|\tilde x(L) - \tilde x(L')\|^2 \le \|\tilde x(L) - \tilde x(L')\|\,\|L - L'\|_\ast,
  \end{equation*}
  and dividing by $\|\tilde x(L) - \tilde x(L')\|$ we obtain the Lipschitz continuity of the argmin-function $\tilde x$.
\end{proof}
A direct consequence of Lemma~\ref{lem:x tilde lipschitz} is the following corollary, which measures the stability (small step size) of the decisions output by OFTRL:
\begin{corollary}At each time $t$, the iterates produced by OFTRL satisfy
$
  \|x^t - x^{t-1}\| \le 3\eta\Delta_\ell.
$
\end{corollary}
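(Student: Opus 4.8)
The plan is to derive the bound as a direct consequence of Lemma~\ref{lem:x tilde lipschitz}, using the representation of the OFTRL iterates through the argmin-function $\tilde x$ from~\eqref{eq:argmin function}. Recall that $x^t = \tilde x(L^{t-1} + m^t)$, where $L^t = \sum_{\tau=1}^t \ell^\tau$. The first step is simply to write the two consecutive iterates in this form: $x^t = \tilde x(L^{t-1} + m^t)$ and $x^{t-1} = \tilde x(L^{t-2} + m^{t-1})$.

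Next I would invoke the $\eta$-Lipschitz continuity of $\tilde x$ established in Lemma~\ref{lem:x tilde lipschitz}, which gives $\|x^t - x^{t-1}\| \le \eta\,\|(L^{t-1} + m^t) - (L^{t-2} + m^{t-1})\|_\ast$. The key observation is that the two arguments differ in a controlled way: since $L^{t-1} - L^{t-2} = \ell^{t-1}$, the difference of the arguments is exactly $\ell^{t-1} + m^t - m^{t-1}$, so that $\|x^t - x^{t-1}\| \le \eta\,\|\ell^{t-1} + m^t - m^{t-1}\|_\ast$.

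Finally, I would apply the triangle inequality to split the dual norm into the three contributions $\|\ell^{t-1}\|_\ast$, $\|m^t\|_\ast$, and $\|m^{t-1}\|_\ast$. By the definition $\Delta_\ell = \max_t \max\{\|\ell^t\|_\ast, \|m^t\|_\ast\}$, each of these is at most $\Delta_\ell$, yielding $\|x^t - x^{t-1}\| \le 3\eta\Delta_\ell$, as claimed. The argument is essentially mechanical; there is no real obstacle beyond correctly identifying the difference of the arguments of $\tilde x$ and tracking the loss/prediction indices. The only mild point of care is the base case $t = 1$, where $L^0$ is the empty sum (equal to $0$) and $x^0$ is undefined, so the stated inequality is to be understood for $t \ge 2$.
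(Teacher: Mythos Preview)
Your proposal is correct and follows essentially the same argument as the paper: express the iterates via the argmin-function $\tilde x$, apply Lemma~\ref{lem:x tilde lipschitz}, simplify the difference of arguments to $\ell^{t-1}+m^t-m^{t-1}$, and bound via the triangle inequality and the definition of $\Delta_\ell$. Your added remark about the base case $t=1$ is a reasonable caveat that the paper omits.
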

\begin{proof}
  \begin{align*}
    \|x^t - x^{t-1}\| &= \left\|\tilde x(L^{t-1} + m^t) - \tilde x(L^{t-2} + m^{t-1})\right\|\\
                      &\le {\eta}\|\ell^{t-1} + m^t - m^{t-1} \|_\ast \le {3\eta\Delta_\ell},
  \end{align*}
  where the first inequality holds by Lemma~\ref{lem:x tilde lipschitz} and the second one by definition of $\Delta_\ell$ and the triangle inequality.
\end{proof}
The rest of the proof, specifically the predictivity parameters $\alpha$ and $\beta$ of OFTRL follow directly from the proof of Theorem~19 in the appendix of~\citet{Syrgkanis15:Fast}.

\end{document}